\documentclass[11pt]{article}
\usepackage{multirow}
\usepackage{amssymb}
\usepackage{amsmath}
\usepackage{amsthm}
\usepackage{thm-restate}
\usepackage[dvipsnames]{xcolor}

\usepackage[pdftex, plainpages = false, pdfpagelabels, 
                 bookmarks=false,
                 bookmarksopen = true,
                 bookmarksnumbered = true,
                 breaklinks = true,
                 linktocpage,
                 pagebackref,
                 colorlinks = true,  
                 linkcolor = blue,
                 urlcolor  = blue,
                 citecolor = red,
                 anchorcolor = green,
                 hyperindex = true,
                 hypertexnames = false,
                 hyperfigures
                 ]{hyperref}
\usepackage{complexity} 
\usepackage{enumitem}
\setlist{nosep}

\usepackage{tikz}

\usetikzlibrary{arrows,shapes,positioning,shadows,trees}

\usepackage{vmargin}
\setmarginsrb{1in}{1in}{1in}{1in}{0pt}{0pt}{0pt}{6mm}

\newcommand{\OO}{\mathcal{O}}

\newtheorem{theorem}{Theorem}
\newtheorem{lemma}{Lemma}

\newtheorem{claim}{Claim}[section]

\newtheorem{proposition}{Proposition}[section]
\newtheorem{reduction rule}{Reduction Rule}[section]
\newtheorem{marking-scheme}{Marking Scheme}[section]

\newenvironment{claimproof}[1]{\par\noindent\textit{Proof of Claim.}\space#1}{\hfill $\diamond $\medskip}


\usepackage{algorithmic}
\usepackage{algorithm}
\usepackage[utf8]{inputenc}
\usepackage[LGR,T1]{fontenc} 
\usepackage[T1]{fontenc}
\usepackage{amsmath}

\newcommand{\blob}{\textnormal{\texttt{blob}}}



\newcommand{\maxpartial}{\textnormal{\textsc{Maximum Partial List $H$-Coloring}}}

\newcommand{\lst}{\textnormal{\texttt{list}}}

\newcommand{\wt}{\textnormal{\texttt{wt}}}
\newcommand{\hm}{\textnormal{\texttt{hom}}}

\newcommand{\size}{\textnormal{\texttt{size}}}


\title{Maximum Partial List $H$-Coloring on $P_5$-free graphs in polynomial time}
\author{Daniel Lokshtanov\thanks{University of California, Santa Barbara, USA \href{mailto:daniello@ucsb.edu}{daniello@ucsb.edu}.}
\and
Pawe\l{} Rz\k{a}\.{z}ewski\thanks{
Warsaw University of Technology, Warsaw, Poland and 
Institute of Informatics, University of Warsaw, Warsaw, Poland \href{mailto:pawel.rzazewski@pw.edu.pl}{pawel.rzazewski@pw.edu.pl}.}
\and Saket Saurabh\thanks{Institute of Mathematical Sciences, Chennai, India and University of Bergen, Norway \href{mailto:saket@imsc.res.in}{saket@imsc.res.in}.}
\and Roohani Sharma\thanks{Discrete Mathematics Group, Institute for Basic Science (IBS), Daejeon, South Korea \href{mailto:roohani@ibs.re.kr}{roohani@ibs.re.kr}. Supported by the Young Scientist Fellowship (YSF) of the Institute for Basic Science (IBS-R029-Y8).}
\and 
Meirav Zehavi\thanks{Ben-Gurion University of the Negev, Beer-sheva, Israel \href{mailto:meiravze@bgu.ac.il}{meiravze@bgu.ac.il}.}
}
\date{}

\begin{document}

\maketitle

\begin{abstract}
    In this article we show that \maxpartial\ is polynomial-time solvable on $P_5$-free graphs for every fixed graph $H$. In particular, this implies that {\sc Maximum $k$-Colorable Subgraph} is polynomial-time solvable on $P_5$-free graphs. This answers an open question from Agrawal, Lima, Lokshtanov, Rz\k{a}\.{z}ewski, Saurabh \& Sharma [SODA 2024]. This also improves the $n^{\OO(\omega(G))}$-time algorithm for {\sc Maximum Partial $H$-Coloring}, where $\omega(G)$ is the size of the largest clique in $G$,
    by Chudnovsky, King, Pilipczuk,  Rz\k{a}\.{z}ewski \& Spirkl [SIDMA 2021],
    to polynomial-time algorithm (independent of the maximum clique size of the graph).
\end{abstract}

\section{Introduction}\label{sec:intro}
In the \maxpartial\ problem, $H$ is a fixed graph without loops
and, 
given a graph $G$ together with two functions, the weight function $\wt: V(G) \to \mathbb{Q}_{\geq 0}$ and the list function $\lst: V(G) \to 2^{V(H)}$,
the goal is to find an induced subgraph $G^*$ of $G$ such that $\wt(V(G^*))$ is maximum and there is a homomorphism from $G^*$ to $H$ respecting the $\lst$ function. That is, there exists a function $\hm: V(G^*) \to V(H)$ such that for each $uv \in E(G)$, $\hm(u)\hm(v) \in E(H)$. Such a subgraph $G^*$ is called a {\em solution} of maximum weight with respect to $\wt$.

When $H$ is a clique on $k$ vertices, and $\lst(v) = V(H)$ for each $v \in V(G)$, \maxpartial\ is the {\sc Maximum $k$-Colorable Subgraph} problem. Further, for $k =2$, this is the {\sc Maximum $2$-Colorable Subgraph} problem or equivalently the dual of the {\sc Odd Cycle Transversal} problem.

In a recent work~\cite{DBLP:journals/talg/0001LLR0S25},
{\sc Odd Cycle Transversal} was shown to be polynomial-time solvable on $P_5$-free graphs, thereby answering the last open case in understanding the P versus NP-hard dichotomy for {\sc Odd Cycle Transversal} on $H$-free graphs, when $H$ is connected. 
Here $P_5$ is a path on $5$ vertices.
It was posed as an open question in~\cite{DBLP:conf/soda/AgrawalLLSS24},
whether {\sc Maximum $k$-Colorable Subgraph} problem is polynomial-time solvable on $P_5$-free graphs.
In this work, we resolve this question in positive by giving a polynomial-time algorithm for a more general problem \maxpartial.
Our result also considerably improves the algorithm for
{\sc Maximum Partial $H$-Coloring} by Chudnovsky, King, Pilipczuk,  Rz\k{a}\.{z}ewski \& Spirkl~\cite{DBLP:journals/siamdm/ChudnovskyKPRS21} who gave an algorithm running in time $n^{\OO(\omega(G))}$, where $\omega(G)$ is the maximum clique number of the graph $G$.

\begin{theorem}\label{thm:main}
    \maxpartial\ can be solved in $n^{\OO(k^4)}$ time  on $P_5$-free graphs, where $k=|V(H)|$.
\end{theorem}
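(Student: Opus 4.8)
Our plan is to combine the structure theory of $P_5$-free graphs with a branching scheme whose branching factor is governed by $k=|V(H)|$ rather than by the clique number of $G$. As preprocessing we delete every vertex with an empty list (it can never lie in a solution and contributes $0$ to the weight, and deletion keeps the graph $P_5$-free), and, since the optimum is additive over connected components, we work on one component at a time; so assume $G$ is connected with all lists nonempty.

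The engine is the Bacs\'{o}--Tuza dichotomy: a connected $P_5$-free graph has a dominating clique or a dominating induced $P_3$, and one of the two can be produced in polynomial time. If $G$ has a dominating induced $P_3$ on $\{a,b,c\}$, we branch over the at most $(k+1)^3$ choices for the restriction of a hypothetical solution's homomorphism to $\{a,b,c\}$ (each vertex is deleted or sent to a vertex of $H$, respecting the edges $ab,bc$), and in each branch intersect the list of every remaining $u$ with $\bigcap_{x} N_H(\hm(x))$ over the surviving $x\in\{a,b,c\}$ adjacent to $u$; this is sound because $\{a,b,c\}$ dominates $G$. The interesting case is a dominating clique $C$, which may be arbitrarily large; here we use that, $H$ being loopless, the trace of a solution on a clique is an injection into a clique of $H$, so at most $\omega(H)\le k$ vertices of $C$ survive and the pair $\bigl(C\cap V(G^*),\ \hm|_{C\cap V(G^*)}\bigr)$ ranges over only $n^{\OO(k)}\cdot k^{\OO(k)}$ possibilities. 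After guessing it and tightening the lists, the components $D_1,\dots,D_p$ of $G-C$ become mutually independent subproblems (all edges leaving each $D_i$ go to the now fully decided $C$), and we recurse on each, adding back the weight of the surviving part of $C$.

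The catch — and the real content — is that the recursion above has unbounded depth, so taken verbatim it runs in $n^{\Omega(n)}$ time. To get down to $n^{\OO(k^3)}$ one must avoid re-guessing from scratch at every level. The point to push on is that when we descend into a component $D$ of $G-C$, the set $C':=N_G(D)\cap C$ is itself a clique that dominates $D$, and all but its (at most $k$) surviving vertices have already been resolved by the guess on $C$; hence the only data crossing from the already-processed part of the graph into $D$ is an $\OO(k)$-vertex ``interface'' carrying fixed $H$-images, never a large separator. Turning this into a bound — showing that throughout the recursion one never needs to remember more than roughly $\OO(k^2)$ boundary vertices, and organizing the descent as a memoized recursion / dynamic program over iterated dominating structures so that only $n^{\OO(k^3)}$ genuinely distinct subproblems ever arise — is the step I expect to be the main obstacle, and it is presumably where the structural lemmas of the paper (including its notion of a \blob) are needed. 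The base case is harmless: once the governing dominating structure has been fully decided and the residual graph is essentially a clique, deciding which vertices to keep and where to send them is a maximum-weight bipartite matching (equivalently min-cost flow) problem, one instance per clique of $H$, hence polynomial.
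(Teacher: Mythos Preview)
Your proposal identifies the right starting point (Bacs\'{o}--Tuza domination, guessing the trace of the solution on the dominating set) but leaves the central difficulty unresolved, and you say as much yourself. The recursion you describe has unbounded depth: after guessing the at most $k$ surviving vertices of a large dominating clique $C$ and descending into a component $D$ of $G-C$, you must apply Bacs\'{o}--Tuza again to $D$, producing a new dominating clique $C''$, and nothing forces $C''\subseteq C$ or even $|C''|<|C|$. Your proposed fix --- that the interface $C'=N_G(D)\cap C$ is a small clique dominating $D$ --- is true but does not help: $C'$ has already been fully decided and removed, so it cannot serve as the dominating clique for the next round, and Bacs\'{o}--Tuza applied to $D$ need not return anything related to $C'$. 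Memoization does not save you either, because the subproblem is determined by the induced subgraph $D$ together with the current list function, not just by an $\OO(k)$-vertex boundary, and there is no a priori polynomial bound on the number of such pairs that arise. As written, this is an outline with the hard step missing, not a proof.

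The paper takes a genuinely different route. For the connected-solution case (Lemma~\ref{lem:conn}) the induction parameter is not a graph decomposition but the \emph{maximum list size}. One guesses a dominating set $D\subseteq C$ of size $\le\max\{k,3\}$ together with $\hm|_D$; since every remaining vertex lies in some $X_i\subseteq N(d_i)$, the colour $\hm(d_i)$ is removed from its list and the list size drops by one. The crux is a $P_5$-freeness claim (any independent set $I\subseteq X_i$ has a $2$-element subset dominating $N(I)\cap X_j$) that, after guessing $\OO(k^3)$ two-vertex sets, makes the lists on the two endpoints of every $X_i$--$X_j$ edge disjoint; the $X_i$'s then decouple and one recurses with maximum list size $\le k-1$. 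Recursion depth is $k$, which is where the $n^{\OO(k^3)}$ comes from. The non-connected case is handled separately by enumerating a family $\mathcal{C}$ of $k^k\cdot n^{\OO(k)}$ candidate components and reducing to \textsc{Maximum Weight Independent Set} on the $P_5$-free blob graph $G^{\blob}_{\mathcal{C}}$. Finally, your base case is off: when all lists are singletons the problem is $k$ instances of \textsc{Maximum Weight Independent Set} on $P_5$-free graphs (one per colour class), not a bipartite matching.
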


We would like to remark that in a recent and independent work, Henderson, Smith-Roberge, Spirkl \& Whitman~\cite{henderson2024maximumkcolourableinducedsubgraphs} show that the {\sc Maximum $k$-Colorable Subgraph} problem is polynomial-time solvable on $(P_5 + rK_1)$-free graphs. While their algorithm works for a more general graph class, our algorithm works for a more general problem.

To prove Theorem~\ref{thm:main}, 
we mimic the general approach of~\cite{DBLP:journals/talg/0001LLR0S25} of designing a polynomial-time algorithm for {\sc Odd Cycle Transversal} on $P_5$-free graphs, for \maxpartial.
The main part of the approach of~\cite{DBLP:journals/talg/0001LLR0S25} is showing a polynomial-time algorithm that finds a polynomial-sized family of connected vertex sets of the input graph such that there exists a maximum weight solution which is a disjoint union of some connected sets in this family, and any disjoint union of connected sets in this family is a solution.
We prove a similar result for \maxpartial, by following the general approach in~\cite{DBLP:journals/talg/0001LLR0S25} and adapting its steps to the more general setting.
The major challenge here is to deal with a scenario where one is promised that a maximum weight solution exists which is also connected. With this extra promise, the setup of~\cite{DBLP:journals/talg/0001LLR0S25} immediately finds a maximum weight solution for the {\sc Odd Cycle Transversal} problem: 
 one can find two disjoint sets here and argue that there is a solution which is the union of two arbitrary independent sets from each of these sets. 
 Such independent sets of a $P_5$-free graph can be found in polynomial time using the algorithm of~\cite{DBLP:conf/soda/LokshantovVV14}.

But for the case of \maxpartial, the situation, even in the case when a connected solution is guaranteed, is far more complicated.
For this situation, in Section~\ref{sec:conn}, we design a polynomial-time procedure that reduces the problem to solving it on a few instances where the maximum size of the list of any vertex (which is upper bounded by $|V(H)|$) of these instances is strictly smaller than that of the input list. 
This allows to then call the main algorithm recursively, at most $|V(H)|$-many times.

The full approach of finding a polynomial-sized family of connected vertex sets, in polynomial-time, with a guarantee that a maximum weight solution for \maxpartial\ can be obtained by taking the disjoint union of some of its connected sets, and that a disjoint union of any subset of its connected sets guarantees a solution, 
is given in Section~\ref{sec:nonconn}.
Finally, the full proof of Theorem~\ref{thm:main} appears in Section~\ref{sec:proof-of-theorem}.

\section{Preliminaries}\label{sec:prelims}
Throughout the paper, $n$ denotes the number of vertices in the input graph $G$. 
For a set $X$ and positive integers $i,j_1,j_2$, we denote by $2^X$ the collection of all subsets of $X$, by ${X \choose i}$ the collection of all $i$-sized subsets of $X$, by ${X \choose \leq i}$ all subsets of $X$ of size at most $i$, and by ${X \choose j_1 \leq i \leq j_2}$ the collection of all $i$-sized subsets of $X$ where $j_1 \leq i \leq j_2$.
For any function $f : A \to B$ and any $C \subseteq A$, by $f_{|C}$ we represent the function $f$ whose domain is restricted to $C$.

For any graph $G$, for any $v \in V(G)$, by $N_G(v)$ we denote the {\em open neighbourhood} of $v$ in $G$, that is $N_G(v)$ is the set of vertices in $G$ which are adjacent to $v$ in $G$.
By $N_G[v]$, we denote the set of {\em closed neighbourhood} of $v$ in $G$, that is $N_G[v]:= N_G(v) \cup \{v\}$.
For any $X \subseteq V(G)$, the set $N_G(X):= (\bigcup_{v \in X} N_G(v)) \setminus X$ is the open neighbourhood of $X$ in $G$ and the set $N_G[X]:= (\bigcup_{v \in X} N_G(v)) \cup X$ is the closed neighbourhood of $X$ in $G$.
Whenever the graph $G$ is clear from the context, we drop the subscript $G$ in the above notation.

Given the $\lst: V(G) \to 2^{V(H)}$ function, by the {\em size of $\lst$}, denoted by $\size(\lst)$, we mean $\max_{v \in V(G)} |\lst(v)|$.

For any graph $G$ and set $X \subseteq V(G)$, the set $X$ is called a {\em module} in $G$ if for each $u,v \in X$, $N_G(u) \setminus X = N_G(v) \setminus X$.
For $X,Y \subseteq V(G)$, $E(X,Y)$ denotes the set of edges with one endpoint in $X$ and the other in $Y$. 
For a connected component $C$ of $G$, we sometimes abuse notation and write $C$ for its vertex set too when the context is clear.
A {\em dominating set} of $X$ in $G$ is a vertex subset $Y \subseteq X$ such that for each $u \in X$, either $u \in Y$ or $N_G(u) \cap Y \neq \emptyset$.
A dominating set of a graph $G$ is simply a dominating set of $V(G)$.
For any positive integer $t$, $P_t$ is a path on $t$ vertices.

We will use the following two facts about $P_5$ graphs. Proposition~\ref{prop:dom} says that connected $P_5$-free graphs have a dominating set which either induces a path on three vertices or a clique. Proposition~\ref{prop:is} says that {\sc Maximum Weight Independent Set} problem on $P_5$-free graphs can be solved in polynomial time.

\begin{proposition}[Theorem~$8$, \cite{bacso1990dominating}]\label{prop:dom}
Every connected $P_5$-free graph $G$ has a dominating set $D$ such that $G[D]$ is either a $P_3$ or a clique.
\end{proposition}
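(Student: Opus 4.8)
The plan is to reprove the statement following Bacsó and Tuza. The one fact I would use repeatedly is that in a connected $P_5$-free graph $G$ a shortest path between two vertices is induced, and therefore has at most four vertices; in particular $G$ has diameter at most $3$. The argument then branches on the number $k$ of vertices of a longest induced path $p_1p_2\cdots p_k$ of $G$, which satisfies $k\le 4$.

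For $k\le 2$ the graph $G$ has no induced $P_3$, so the relation ``adjacent or equal'' is an equivalence on $V(G)$; since $G$ is connected it is a clique and $D=V(G)$ works. For $k=3$ the graph $G$ is a connected $P_4$-free graph, i.e.\ a connected cograph on at least three vertices, so it is a join $G_1\oplus G_2$ with $V(G_1),V(G_2)\neq\emptyset$. Choosing $v_1\in V(G_1)$ and $v_2\in V(G_2)$, the set $\{v_1,v_2\}$ induces a $K_2$ and dominates $G$ because $v_1$ sees all of $V(G_2)$ and $v_2$ all of $V(G_1)$; this is the clique case.

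This leaves the case $k=4$, which is the heart of the proof: $G$ has an induced $P_4$, say $a\,b\,c\,d$, but no induced $P_5$. Here I would try the induced $P_3$'s inside $\{a,b,c,d\}$ (such as $a\,b\,c$ and $b\,c\,d$) as candidate dominators, and show that if a vertex $x$ escapes domination then one can repair the candidate into a dominating $P_3$ or a dominating clique. The tool is again shortest paths: take a shortest path from $x$ to $\{a,b,c,d\}$; $P_5$-freeness bounds its length by a small constant and constrains the edges between $x$, the internal vertices of that path, and $a,b,c,d$, after which a case analysis locates an induced $P_3$ (or a clique) through $x$ that dominates $G$, or exhibits a local swap on the candidate $P_3$. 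I expect this bookkeeping --- enumerating how the vertices ``far'' from the fixed $P_4$ attach to it --- to be the main obstacle; it is exactly what is carried out in the proof of Theorem~8 of~\cite{bacso1990dominating}, which I would follow for the remaining details.
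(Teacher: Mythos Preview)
The paper does not prove this proposition at all; it is quoted from Bacs\'o and Tuza and used as a black box. So there is no ``paper's proof'' to compare your attempt against.

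On your sketch itself: the cases $k\le 2$ and $k=3$ are correct and cleanly argued. The case $k=4$, however, is the entire content of the statement, and there you only describe a strategy---try a sub-$P_3$ of a fixed induced $P_4$ and ``repair'' when some vertex $x$ escapes---while explicitly deferring the execution to the cited reference. As written this is not a proof. Two concrete gaps: first, you never specify what the repair move is, so nothing can be checked; second, even granting some sensible local swap, you give no termination argument---a swap that captures one escaping vertex may well release another, and your outline does not exclude cycling among candidate $P_3$'s. If you want a self-contained proof, the $k=4$ case must be carried out in full; if you are content to cite the result (as the paper does), the partial sketch adds little.
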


\begin{proposition}[\cite{DBLP:conf/soda/LokshantovVV14}, Independent Set on $P_5$-free]\label{prop:is}
    Given an $n$-vertex graph $G$ with $m$ edges and a weight function $\wt:V(G) \to \mathbb{Q}_{\geq 0}$, there is a $\OO(n^{12}m)$-time algorithm that outputs a set $I \subseteq V(G)$ such that $I$ is an independent set in $G$ and $\wt(I)=\sum_{u \in I}\wt(u)$ is maximum.
\end{proposition}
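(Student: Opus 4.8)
The plan for Proposition~\ref{prop:is} is to reduce \textsc{Maximum Weight Independent Set} on a $P_5$-free graph $G$ to a dynamic program over the \emph{potential maximal cliques} (PMCs) of $G$. Recall that $\Omega \subseteq V(G)$ is a PMC if it is a maximal clique in some minimal triangulation of $G$; equivalently, no component of $G-\Omega$ has $\Omega$ as its full neighborhood, and every pair of non-adjacent vertices of $\Omega$ is ``completed'' through some component of $G-\Omega$. By the framework of Bouchitt\'e and Todinca, extended to problems such as \textsc{Maximum Weight Independent Set} by Fomin, Todinca and Villanger, once the family $\Pi_G$ of all PMCs is available one can compute a maximum-weight independent set by dynamic programming over the \emph{blocks} $(S,C)$ of $G$ (where $S$ is a minimal separator and $C$ a full component of $G-S$) and over the members of $\Pi_G$, in time polynomial in $n$ and $|\Pi_G|$; vertex weights are handled transparently, since the DP already computes, for each block, the maximum weight of an independent set living inside it under a prescribed behaviour on its separator. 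So everything reduces to showing that a $P_5$-free graph has only $n^{\OO(1)}$ PMCs, enumerable in polynomial time.

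To bound $|\Pi_G|$, I would exploit the domination structure of $P_5$-free graphs from Proposition~\ref{prop:dom}. Since $P_5$-freeness is hereditary, every induced subgraph that arises -- in particular, each component of $G-\Omega$ for a PMC $\Omega$ -- is again $P_5$-free and hence has a dominating $P_3$ or a dominating clique. Using these dominating structures, one argues that each PMC $\Omega$ is determined by a bounded number of ``seed'' vertices (roughly: a few vertices from the dominating sets of the relevant components of $G-\Omega$, together with a bounded adjacency pattern), so that $\Omega$ can be read off from these seeds -- for instance, as a combination of their closed neighborhoods intersected with suitable components of $G$ minus the seeds. Ranging over all $\OO(1)$-sized subsets of $V(G)$ as potential seeds and over the constantly many patterns then yields $|\Pi_G| \le n^{\OO(1)}$, and making this reconstruction rule constructive enumerates $\Pi_G$ in polynomial time; feeding $\Pi_G$ into the block dynamic program finishes the algorithm. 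This is the route of Lokshtanov, Vatshelle and Villanger.

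The main obstacle is precisely this combinatorial bound on the number of PMCs. Making ``a bounded set of seed vertices determines $\Omega$'' rigorous requires a careful analysis of how a PMC meets the components of its complement, repeatedly invoking $P_5$-freeness to forbid the long induced paths that would otherwise appear whenever a minimal separator fails to be dominated by few vertices, and organizing the argument so that it stays within $P_5$-free induced subgraphs throughout. Once $\Pi_G$ is shown to be polynomial and is enumerated, the correctness and polynomial running time of the block dynamic program are standard.
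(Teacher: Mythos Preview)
The paper does not prove Proposition~\ref{prop:is}; it is imported as a black box from~\cite{DBLP:conf/soda/LokshantovVV14}, so there is no in-paper proof to compare against. Your sketch is a faithful outline of the cited Lokshtanov--Vatshelle--Villanger argument: enumerate the potential maximal cliques of a $P_5$-free graph (shown to be polynomially many via the Bacs\'o--Tuza domination structure), then run the Bouchitt\'e--Todinca / Fomin--Todinca--Villanger block dynamic program over them to recover a maximum-weight independent set.
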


\section{Inductive step when there exists a connected solution}\label{sec:conn}
Recall that the input instance for the \maxpartial\ problem is $(G,\wt,\lst)$. 
Let $k=|V(H)|$.
For convenience, we will refer to the solution as a vertex set $C$ of $G$ instead of an induced subgraph $G^*$. In this case, we mean to say that the solution is the graph $G$ induced on the vertex set $C$.

Roughly speaking, in this section we show that, assuming our input instance admits a maximum weight solution which is connected, the problem reduces to solving instances with strictly smaller $\lst$ size. Since the maximum size of $\lst$ is at most $|V(H)|$, this allows for a recursive algorithm of depth $|V(H)|$, which is presented in the next section.

More concretely, in this section, we assume that $(G,\wt,\lst)$ admits a maximum weight solution $C$ that is connected and show that, in time $n^{\OO(k^3)}$, one can output an $n^{\OO(k^3)}$-sized family $\mathcal{F}$ containing elements of the form $(D, (G_1,\wt_1,\lst_1), \ldots, (G_p,\wt_p,\lst_p))$, where $D \subseteq V(G)$, $p \leq k$, for each $i \in \{1, \ldots, p\}$, $G_i$ is an induced subgraph of $G$, $\wt_1: V(G_1) \to \mathbb{Q}_{\geq 0}$ and $\lst_i : V(G_i) \to 2^{V(H)}$. 
The guarantee of $\mathcal{F}$ is that if $(G,\wt,\lst)$ admits a maximum weight solution that is connected, then there exists an element $(D, (G_1,\wt_1,\lst_1), \ldots, (G_p,\wt_p,\lst_p))$ in the family $\mathcal{F}$ such that there exists a maximum weight solution $S$ of $(G,\wt,\lst)$ which is a union of $D$ and an arbitrary maximum weight solution of $(G_i,\wt_i,\lst_i)$, for each $i \in \{1, \ldots, p\}$.
Furthermore, for each element $(D, (G_1,\wt_1,\lst_1), \ldots, (G_p,\wt_p,\lst_p))$ of $\mathcal{F}$, for each $i \in \{1, \ldots, p\}$,
$\size(\lst_i) < \size(\lst)$.

\begin{algorithm}[t] 
\caption{List-size reduction} 
\label{alg:size-reduction} 
\begin{algorithmic}[1] 
    \REQUIRE An undirected graph $G'$,
    a weight function $\wt : V(G') \to \mathbb{Q}_{\geq 0}$ and 
    a list function $\lst: V(G') \to 2^{V(H)}$
    \ENSURE Either $C \subseteq V(G')$ such that $C$ is a maximum weight solution of $(G',\wt,\lst)$ or family $\mathcal{F}$ satisfying the properties of Lemma~\ref{lem:conn}. 
    \FORALL{$D \subseteq {V(G') \choose {i \leq \max\{k,3\}}}$}\label{line:for-one-reduce}
            \STATE Set $G=G'$ and let $D=\{d_1, \ldots, d_{|D|}\}$.\label{line:set-one-reduce}
            \STATE Consider the following partition of $V(G) = (D \uplus X_1 \uplus \ldots, X_{|D|} \uplus R)$, where $X_1= N_G(d_1) \setminus D$, for each $i \in \{2, \ldots, |D|\}$, $X_i = N_G(d_i) \setminus (X_1 \cup \ldots \cup X_{i-1} \cup D)$, and $R= V(G) \setminus (X_1 \cup \ldots \cup X_{|D|} \cup D)$.\label{line:partition-reduce}
            \STATE Update $G=G-R$.\label{line:delete-R}
            \FORALL{$1 \leq i <j \leq |D|$, $r \in \{1, \ldots, k\}$ and $\widetilde{X}_{i,j}^r \subseteq {X_i \choose {\leq 2}}$ such that $\widetilde{X}_{i,j}^r$ is an independent set in $G$}\label{line:for-two-reduce}
                    \WHILE{$\exists u \in \widetilde{X}_{i,j}^r, \exists v \in X_j$ such that $uv \in E(G)$, $r \in \lst(u)$, $r' \in \lst(v)$ where $r' =r$ or $rr' \not \in E(H)$}\label{line:while-one-reduce-start}
                            \STATE Update $\lst(v) = \lst(v) \setminus \{r'\}$.
                    \ENDWHILE \label{line:while-one-reduce-end}
                    \WHILE{$\exists u \in X_i, \exists v \in X_j$ such that $uv \in E(G)$, $r \in \lst(u)$, $r' \in \lst(v)$ where $r' =r$ or $rr' \not \in E(H)$} \label{line:while-two-reduce-start}
                            \STATE Update $\lst(u) = \lst(u) \setminus \{r\}$.
                    \ENDWHILE \label{line:while-two-reduce-end}
                    \FORALL{$h: D \to \{1, \ldots, k\}$ such that if $h(d_p) =q$, then $q \in \lst(d_p)$}\label{line:for-three-reduce}
                            \STATE Update $\lst(d_i)=\{h(d_i)\}$.
                    \ENDFOR\label{line:for-three-end-reduce}
                    \FORALL{$d_p \in D$, $v \in N_G(d_p)$}
                            \STATE Update $\lst(v) = \lst(v) \setminus \{\{h(d_p)\} \cup \{r' : h(d_p)r' \not \in E(H)\} \}$.
                    \ENDFOR\label{line:for-four-end-reduce}
                    \IF{$\size(\lst) \leq 1$}\label{line:if-size-one-reduce}
                            \FORALL{$r \in \{1, \ldots, k\}$} 
                                    \STATE Set $V_r =\{ v \in V(G) : \lst(v) = \{r\}\}$.
                                    \STATE Let $I_r$ be a maximum weight independent set in $G[V_r]$ with respect to the weight function $\wt_{|V_r}$, computed using the polynomial-time algorithm of Proposition~\ref{prop:is}.\label{line:algo-is-reduce}
                                    \STATE Update $\mathcal{C} = \mathcal{C} \cup \{\bigcup_{r=1}^k I_r\}$.
                            \ENDFOR \label{line:if-size-one-reduce-end}
                    \ELSE \label{line:if-size-two-reduce}
                            \STATE Update $\mathcal{F} = \mathcal{F} \cup \{(D, (G[X_1], \wt_{|X_1}, \lst_{|X_1}), \ldots, (G[X_{|D|}], \wt_{|X_{|D|}}, \lst_{|X_{|D|}})\}$. \label{line:if-size-two-reduce-end}
                    \ENDIF
            \ENDFOR
    \ENDFOR   
    \IF{$|\size(\lst)| \leq 1$}
        \RETURN $C' \in \mathcal{C}$ such that $\wt(C')$ is maximum.\label{line:output-solution}
    \ELSE
        \RETURN $\mathcal{F}$. \label{line:output-family}
    \ENDIF
\end{algorithmic}
\end{algorithm}

\begin{lemma}\label{lem:conn}
    Given a graph $H$ and an instance $(G',\wt,\lst)$ of \maxpartial\ such that $G$ is $P_5$-free,
     Algorithm~\ref{alg:size-reduction} is an
     $n^{\OO(k^3)}$-time algorithm, where $k=|V(H)|$, that does the following.
     \begin{itemize}
         \item Either correctly outputs a maximum weight solution of $(G', \wt, \lst)$, or
         \item outputs an $n^{\OO(k^3)}$-sized family $\mathcal{F}$, each of whose elements are of the form $(D, (G'_1,\wt_1,\lst_1), \allowbreak  \ldots, \allowbreak (G'_p,\wt_p,\lst_p))$, 
    where $D \subseteq V(G')$, $p \leq k$, 
    for each $i \in \{1, \ldots, p\}$, 
    $G'_i$ is an induced subgraph of $G$, 
    $\wt_i: V(G'_i) \to \mathbb{Q}_{\geq 0}$, 
    $\lst_i : V(G'_i) \to 2^{V(H)}$ 
    and $\size(\lst_i) < \size(\lst)$ such that the following holds:
    if $(G',\wt,\lst)$ has a connected solution of maximum weight, then there exists some maximum weight solution $S$ of $(G',\wt,\lst)$,
    and some element $(D, (G'_1,\wt_1,\lst_1), \ldots, \allowbreak (G'_p,\wt_p,\lst_p))$ of $\mathcal{F}$,
    such that $S$ is the union of $D$ with an arbitrary maximum weight solution of $(G'_i,\wt_i,\lst_i)$, for each $i \in \{1, \ldots, p\}$.
    \end{itemize}
     
\end{lemma}
\begin{proof}
    Note that the statement of the lemma assumes an existence of a connected solution, but the solution returned in the first point may not be connected.  Without loss of generality let $V(H)=\{1, \ldots,k\}$.

    We first argue about the running time of Algorithm~\ref{alg:size-reduction}.
    Observe that except for the for-loops at Lines~\ref{line:for-one-reduce}, \ref{line:for-two-reduce} and~\ref{line:for-three-reduce},
    all the other steps run in $n^{\OO(1)}$ time (including the algorithm of Proposition~\ref{prop:is} at Line~\ref{line:algo-is-reduce}). The for-loop at Line~\ref{line:for-one-reduce} runs for $n^{\OO(k)}$ times, at Line~\ref{line:for-two-reduce} runs for $n^{\OO(k^3)}$ times and at Line~\ref{line:for-three-reduce} runs for $k^{\OO(k)}$ time. 
    Since $k \leq n$ (without loss of generality),
    the running time of Algorithm~\ref{alg:size-reduction} is $n^{\OO(k^3)}$.

    In the following, we argue about the correctness of Algorithm~\ref{alg:size-reduction}.
    Note that Algorithm~\ref{alg:size-reduction} either reports a set $C' \subseteq V(G')$ in Line~\ref{line:output-solution} or a family $\mathcal{F}$ at Line~\ref{line:output-family}.

    \vspace{10pt}

    \noindent{{\bf Lines~\ref{line:for-one-reduce}-\ref{line:set-one-reduce}}}
    Let $C \subseteq V(G)$ be a connected solution of maximum weight for the instance $(G', \allowbreak \wt, \allowbreak  \lst)$.
    Since $G'[C]$ has a (list) homomorphism to $H$,
    the maximum size of a clique in $G'[C]$ is at most $|V(H)|$ ($=k$).
    Further since $G'[C]$ is $P_5$-free, from Proposition~\ref{prop:dom}
    there exists a dominating set $D=\{d_1, \ldots, d_{|D|}\}$ of $C$ of size at most $\max\{k,3\}$. Consider the run of the for-loop at Line~\ref{line:for-one-reduce} for this choice of $D$. Set $G=G'$.

    \vspace{10pt}

    \noindent{{\bf Lines~\ref{line:partition-reduce}-\ref{line:delete-R}}}
    Consider the partition $(D,X_1, \ldots, X_{|D|},R)$ of $V(G)$ from Line~\ref{line:partition-reduce}.
    The only place where we will use the guarantee that there exists a maximum weight solution which is connected, is to claim that $R$ does not intersect with the solution, for the correct guess of the dominating set $D$. 
    Indeed, for any $v \in R$, the vertex $v$ does not belong to $C$,
     because $D$ dominates $C$ and $N[D] \subseteq D \cup \bigcup_{i=1}^{|D|}X_i$.
    Therefore one can safely delete all vertices in $R$. Henceforth, assume that $R = \emptyset$. In fact, $(D,X_1, \ldots, X_{|D|})$ is a partition of $V(G)$.

    \vspace{10pt}

    \noindent{{\bf Line~\ref{line:for-two-reduce}}}
    \begin{claim}\label{claim:ind-matching-size-3}
        For any $i,j \in \{1, \ldots, |D|\}$, $i <j$, 
        if $I$ is an independent set in $X_i$ and $P$ are the neighbors of $I$ in $X_j$, then there exists $I' \subseteq I$ of size at most $2$ that dominates $P$.
    \end{claim}
     \begin{claimproof}
        For contradiction, let $I'$ be a minimum-sized subset of $I$ that dominates $P$ and $|I'| \geq 3$. Then there exists $a,b,c \in I'$ and $x,y,z \in P$ such that $ax,by,cz \in E(G)$ but $ay,az,bx,bz,cx,cy \not \in E(G)$. 

        We consider two exhaustive cases.

        Case 1: $G[\{x,y,z\}]$ is edgeless. In this case $x,a,d_i,b,y$ is an induced $P_5$ in $G$ (note that $d_i$ has no neighbors in $X_j$ by the construction of $X_j$ and because $j >i$).

        Case 2: $G[\{x,y,z\}]$ has an edge. Without loss of generality, let $xy \in E(G)$. In this case $y,x,a,d_i,c$ is an induced $P_5$ in $G$.
    \end{claimproof}

Let $\hm : C \to V(H)$ be a homomorphism from $G[C]$ to $H$ that respects the list function \lst.
Recall that $V(H)=\{1, \ldots, k\}$.
For each $r \in \{1, \ldots, k\}$,
let $C_r:= \hm^{-1}(r)$. Then $G[C_r]$ is edgeless, that is $C_r$ is an independent set in $G$.
For each $i \in \{1, \ldots, |D|\}$ and $r \in \{1, \ldots, k\}$,
let $X_i^r:= X_i \cap C_r$. Note that $X_i^r$ is an independent set.
Further, for each $j \in \{i+1, \ldots, |D|\}$,
from Claim~\ref{claim:ind-matching-size-3}, there exists at most $2$ vertices in $X_i^r$ that dominates all neighbors of $X_i^r$ in $X_j$. 
Let the set of these two vertices be $\widetilde{X}_{i,j}^r$. 
Then, as just argued, from Claim~\ref{claim:ind-matching-size-3}, $(N(X_i^r) \cap X_j) \subseteq (N(\widetilde{X}_{i,j}^r) \cap X_j)$.

For each $i \in \{1, \ldots, |D|\}$, $j \in \{i+1, \ldots, |D|\}$ and $r \in \{1, \ldots, k\}$, 
consider the run of the for-loop at Line~\ref{line:for-two-reduce} for these choices of the sets 
$\widetilde{X}_{i,j}^r$ which need to be independent sets, if guessed correctly. 

\vspace{10pt}

\noindent{{\bf Lines~\ref{line:while-one-reduce-start}-\ref{line:while-one-reduce-end}}}
For each $i \in \{1, \ldots, |D|\}$, $j \in \{i+1, \ldots, |D|\}$ and $r \in \{1, \ldots, k\}$, 
for every vertex $v \in X_j$ such that $v \in N(\widetilde{X}_{i,j}^r)$, 
delete $r$ from its list, that is update $\lst(v)= \lst(v) \setminus \{r\}$. 
Furthermore, let $r' \in \{1, \ldots, k\} \setminus \{r\}$ such that $rr' \not \in E(H)$.
Then further update $\lst(v)= \lst(v) \setminus \{r'\}$. 
If all the previous guesses were correct, then $\hm(v) \not \in \{r,r'\}$, and therefore this does not change
the solution $C$, that is $\hm: C \to V(H)$ is still a homomorphism that respects the updated list function.
Indeed, because $v \in N(\widetilde{X}_{i,j}^r)$ and $\widetilde{X}_{i,j}^r \subseteq X^i_r$ and, for each $u \in X_i^r$, $\hm(u)=r$. If $\hm(v) \in \{r,r'\}$, 
then $v$ has no neighbors in $\widetilde{X}_{i,j}^r$, which is a contradiction.

\vspace{10pt}

\noindent{{\bf Lines~\ref{line:while-two-reduce-start}-\ref{line:while-two-reduce-end}}}
After the above procedure, 
say there exists an edge $uv \in E(G)$ 
such that $u \in X_i, v \in X_j, r \in \lst(u), r' \in \lst(v)$ and either $r=r'$ or $rr' \not \in E(H)$.
Then remove $r$ from the list of $u$, that is update $\lst(u) = \lst(u) \setminus \{r\}$. The resulting instance is an equivalent instance, 
that is $\hm: C \to V(H)$ is still a homomorphism that respects the updated list function. 
This is
because $u \not \in X_i^r$, 
because if it were then $v$, which is a neighbor of $u$, would be dominated by $\widetilde{X}_{i,j}^r$, and hence $r'$ should have been removed from the $\lst(v)$ by the operation from the previous paragraph.

After doing the clean-ups of the $\lst$ function in the above two paragraphs,
we get to the scenario where for each $X_i,X_j$, $i,j \in \{1, \ldots, |D|\}$, 
if there exists an edge $uv$ where $u \in X_i$ and $v \in X_j$, 
then the lists of $u$ and $v$ are disjoint, and if $r \in \lst(u)$ then for each $r'$ such that $rr'\not \in E(H)$, $r' \not \in \lst(v)$.

\vspace{10pt}

\noindent{{\bf Lines~\ref{line:for-three-reduce}-\ref{line:for-four-end-reduce}}}
Recall that $D \subseteq C$.
For each $d_p \in D$, guess $h(d_p) \in \{1, \ldots, k\}$, 
such that there exists a homomorphism $\hm : C \to V(H)$ that respects $\lst$, 
such that $\hm(d_p) = h(d_p)$.

As a sanity check, discard those guesses where adjacent vertices in $D$ are assigned the same value or values that are not adjacent in $H$.
That is,
for each $d_p \in D$ and  $v \in N_G(d_p)$, update $\lst(v) = \lst(v) \setminus (\{h(d_p)\} \cup \{r': h(d_p)r' \not \in E(H)\})$. 
Also update the $\lst(d_p)=\{h(d_p)\}$.

\vspace{10pt}

\noindent{{\bf Lines~\ref{line:if-size-one-reduce}-\ref{line:if-size-one-reduce-end}, and~\ref{line:output-solution}}}
If $\size(\lst) =1$, then 
for each $r \in \{1, \ldots, k\}$,
let $V_r=\{v \in V(G): \lst(v) = \{r\}\}$.
Let $I_r$ be a maximum weight independent set in $G[V_r]$ computed using the algorithm of Proposition~\ref{prop:is}. Then $\bigcup_{r=1}^k I_r$ is a solution for $(G,\wt,\lst)$.
Indeed, because of all the earlier preprocessing there is indeed a homomorphism from $\bigcup_{r=1}^k I_t$ to $V(H)$ that respects $\lst$. 
Further, for any maximum weight independent set $C$ for which the choices in all previous for-loops were correct, 
$C \cap V_r$ is an independent set and hence $\wt(\bigcup_{r=1}^k I_r) \geq \wt(C)$.
Such a set is then outputted at Line~\ref{line:output-solution}.

\vspace{10pt}

\noindent{{\bf Lines~\ref{line:if-size-two-reduce}-\ref{line:if-size-two-reduce-end}, and~\ref{line:output-family}}}
If $\size(\lst) \geq 2$, then 
for the correct choices in the previous for-loops with respect to the solution $C$,
we get the element $(D, (G[X_1], \wt_{|X_1}, \lst_{|X_1}),  \allowbreak \ldots, \allowbreak (G[X_{|D|}], \allowbreak \wt_{|X_{|D|}},  \allowbreak \lst_{|X_{|D|}}))$ in $\mathcal{F}$.

Since for any $1<i < j \leq |D|$ and any $u \in X_i$ and $v \in X_j$, 
if $r \in \lst(u)$, then $N_H[r] \cap \lst(v) =\emptyset$,
the problem reduces to solving independently on each $G[X_i]$, for $i \in \{1, \ldots, |D|\}$, and taking the union of these solutions with $D$. 
Since $X_i \subseteq N_G(d_i)$, 
for each $v \in X_i$, $\lst(v)  \subseteq \{1, \ldots, k\} \setminus \{h(d_i)\}$ after all the preprocessing.
Therefore, $|\lst(v)|$ has strictly decreased after the above updates.

\end{proof}

\section{Constructing a polynomial-sized family that contains subsets of connected components of a solution}\label{sec:nonconn}

In this section, we prove a lemma that is an analogue of~\cite[Lemma~$1$]{DBLP:conf/soda/AgrawalLLSS24} for \maxpartial. The proof of Lemma~\ref{lem:family} is in fact, an adaptation of the proof of~\cite[Lemma~$1$]{DBLP:conf/soda/AgrawalLLSS24}, but it differs substantially from~\cite[Lemma~$1$]{DBLP:conf/soda/AgrawalLLSS24} in its base case. 
Our base case, that is when we encounter a situation where a connected solution is guaranteed, is far more involved than that of~\cite[Lemma~$1$]{DBLP:conf/soda/AgrawalLLSS24}.
In fact, this is where we resort to Lemma~\ref{lem:conn} recursively to reduce the measure which is the size of the input $\lst$ function.

\begin{lemma}\label{lem:family}
Given a $P_5$-free graph $G'$, a weight function $\wt : V(G) \to \mathbb{Q}_{\geq 0}$ and a list function $\lst: V(G) \to 2^{V(H)}$, 
there is an algorithm that runs 
in $n^{\OO(k'^4)}$ time,
and outputs an $n^{\OO(k'^4)}$-sized collection   
$\mathcal{C} \subseteq 2^{V(G')}$ of \emph{connected} vertex sets of $G'$, where $k'=|V(H)|$,
    such that:  
    \begin{enumerate}
        \item for each $C \in \mathcal{C}$, 
        there exists a homomorphism $\hm : C \to V(H)$ that respects $\lst$,
        and
        \item there exists a set $S \subseteq V(G')$ such that 
        there exists a homomorphism $\hm : S \to V(H)$ that respects $\lst$,
         $\wt(S)$ is maximum
        and $S = \bigcup_{C \in \mathcal{C}'} C$, 
        where $\mathcal{C}' \subseteq \mathcal{C}$
        and for each $C_1, C_2 \in \mathcal{C}'$, $C_1 \cap C_2 =\emptyset$, and $E(C_1,C_2)=\emptyset$.
    \end{enumerate}   
\end{lemma}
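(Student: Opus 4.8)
The plan is to mimic the structure of \cite[Lemma~1]{DBLP:conf/soda/AgrawalLLSS24}, combining a branching procedure over dominating structures of $P_5$-free graphs with the connected-solution algorithm of Lemma~\ref{lem:conn}. First I would take an optimal solution $S^\star$, i.e.\ a maximum-weight set admitting a list homomorphism to $H$, and let $C_1^\star,\dots,C_t^\star$ be the connected components of $G[S^\star]$. Each $C_i^\star$ is a connected $P_5$-free graph with a list homomorphism to $H$, so it has a dominating set inducing a $P_3$ or a clique, of size at most $\max\{k,3\}$; crucially $\omega(G[C_i^\star])\le k$. The goal is to build, for every possible "component type", a polynomial-size family of candidate connected sets, so that the true components $C_i^\star$ all appear in the union family $\mathcal{C}$, and then argue (2) holds with $\mathcal{C}'=\{C_1^\star,\dots,C_t^\star\}$, whose members are pairwise non-adjacent since they are distinct components of $G[S^\star]$.

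The key steps, in order, are as follows. (i) Branch over a guessed dominating set $D$ (of size $\le\max\{k,3\}$) of a single component, over a guessed coloring $\hm|_D$ of $D$, and over the auxiliary sets $\widetilde X_{i,j}^r$ of size $\le 2$ exactly as in the proof of Lemma~\ref{lem:conn}; this is where the $n^{\OO(k^3)}$ factor and the $k^{\OO(k)}$ factor come from. (ii) For each branch, run (a suitable single-component variant of) the algorithm of Lemma~\ref{lem:conn} to produce a candidate connected set $C$ together with a witnessing list homomorphism; this guarantees property (1) directly, and we only add $C$ to $\mathcal{C}$ if it is indeed connected and homomorphism-respecting. (iii) Observe that for the correct branch corresponding to a fixed $C_i^\star$ — namely the one guessing $D\subseteq C_i^\star$ and the restriction of the optimal homomorphism — the algorithm of Lemma~\ref{lem:conn} recovers a connected set of weight at least $\wt(C_i^\star)$ contained in $N[D]$; a careful choice (using that $\wt$ is being maximized componentwise and that the $X_i$-decomposition isolates $C_i^\star$) lets us ensure $C_i^\star$ itself, or an equally good connected replacement that is still a component-sized piece of an optimal solution, lies in $\mathcal{C}$. (iv) Bound $|\mathcal{C}|$ by the product of the number of branches and the (polynomial) number of outputs per branch, giving the claimed $k^k\cdot n^{\OO(k)}$ after absorbing the per-branch cost into the running time. (v) Finally, assemble $\mathcal{C}'$: take an optimal $S$, decompose it into components, and for each component select the corresponding member of $\mathcal{C}$; pairwise disjointness and $E(C_1,C_2)=\emptyset$ are immediate from being distinct connected components of $G[S]$, and $\wt(S)$ is maximum by the choice of $S$.

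The main obstacle I anticipate is step (iii): ensuring that the branching over a dominating set of \emph{one} component does not get "contaminated" by vertices of other components or by vertices outside $S^\star$ that happen to be dominated by $D$. In Lemma~\ref{lem:conn} this was handled by deleting the remainder set $R$, but here we cannot delete globally since other components live there; instead, for a fixed branch we should restrict attention to the subgraph induced on $N[D]$ (after the list clean-ups), solve for a maximum-weight connected list-homomorphic set there, and argue that this is exactly the regime where Lemma~\ref{lem:conn}'s analysis applies. One must then verify that replacing $C_i^\star$ by the connected set the algorithm actually returns does not decrease total weight and does not create adjacencies between the chosen pieces — this requires a short exchange argument showing that if the returned piece for one component overlapped or touched another component, one could re-route the homomorphism, contradicting maximality of $\wt(S^\star)$. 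The remaining bookkeeping (list updates, discarding inconsistent guesses, the size and time bounds) is routine given the machinery already set up in Section~\ref{sec:conn}.
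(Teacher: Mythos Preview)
Your plan captures the high-level idea (branch on a dominating set of a component, invoke Lemma~\ref{lem:conn}, do an exchange), but the obstacle you flag in step~(iii) is real and your proposed fix does not work. Restricting to $N[D]$ and running Lemma~\ref{lem:conn} there can return a set $C_{D}$ that contains vertices of $N(D)\setminus C$, and such vertices may be adjacent to \emph{other} components of $S^\star$; then $(S^\star\setminus C)\cup C_D$ need not have a list homomorphism to $H$ at all, so the ``re-route the homomorphism'' exchange you sketch is not valid. The paper resolves this by \emph{isolating $N[C]$ as a connected component of a reduced graph before calling Lemma~\ref{lem:conn}}, which requires three ingredients you are missing: (a) delete every vertex adjacent to $h^{-1}(r)$ for all $r$ (these lie in $N(C)\setminus S$); (b) prove that every connected component of $G-N[D]$ that is not a module must lie in $N(C)$ (Claims~\ref{claim:no-across-edges} and~\ref{claim:module}), and delete those; (c) guess a \emph{second} set $D'$ of size $\le k+1$ so that $D\cup D'$ dominates $C\cup (N(C)\setminus N(D))$ (Claim~\ref{claim:ds-c-and-x}), whence $N[D\cup D']=N[C]$ exactly, and then peel off the boundary. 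Only after this does the exchange $(S\setminus C)\cup C_{D,D'}$ go through, because $C_{D,D'}\subseteq N[C]$ and $N[C]$ has become a component, hence $E(C_{D,D'},S\setminus C)=\emptyset$ automatically.

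A secondary issue: in step~(i) you include the guesses of the $\widetilde X_{i,j}^r$ in the outer branching. If each such branch contributes a set to $\mathcal{C}$, you get $|\mathcal{C}|=n^{\OO(k^3)}$, not $k^k\cdot n^{\OO(k)}$. In the paper the only outer branches are over $D$, the surjection $h$, and $D'$ (total $k^{\OO(k)}\cdot n^{\OO(k)}$); the $n^{\OO(k^3)}$ cost lives entirely inside the black-box call to Lemma~\ref{lem:conn} and therefore affects running time but not $|\mathcal{C}|$.
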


The algorithm for Lemma~\ref{lem:family} is described in Algorithm~\ref{alg:key}.
The correctness and the running time bounds of Algorithm~\ref{alg:key} are proved in Lemma~\ref{lem:correctness}. This will prove Lemma~\ref{lem:family}.

\begin{algorithm}[t] 
\caption{Isolating a connected component} 
\label{alg:key} 
\begin{algorithmic}[1] 
    \REQUIRE An undirected graph $G'$,
    a weight function $\wt : V(G') \to \mathbb{Q}_{\geq 0}$ and 
    a list function $\lst: V(G') \to 2^{V(H)}$
    \ENSURE $\mathcal{C} \subseteq 2^{V(G')}$ satisfying the properties of Lemma~\ref{lem:family}

    \STATE Initialize $\mathcal{C}=\{\{v\} : v \in V(G) \}$.\label{line:init}
    \FORALL {$I \subseteq \{1, \ldots, k'\}$}\label{line:minusone}
        \STATE Set $H= H-I$. Let $k= |V(H)|$ and $V(H) =\{1, \ldots,k\}$. \label{line:zero}
        \FORALL{$D \subseteq {V(G') \choose {k \leq i \leq k+1}}$, 
         where $G'[D]$ is connected}\label{line:for-one} 
             \FORALL{$h : D \to \{1, \ldots, k\}$ such that $h^{-1}(r) \neq \emptyset$ for each $r \in \{1, \ldots, k\}$}\label{line:for-two} 
    	           \STATE Initialize $G=G'$.\label{line:init-graph-G}
                    \WHILE{there exists $u \in \bigcap_{r \in \{1, \ldots, k\}} N_G(h^{-1}        (r))$}\label{line:while-one}
                            \STATE Delete $u$ from $G$. That is, $G=G-u$. \label{line:while-one-mid}
                    \ENDWHILE\label{line:while-one-end}
                    \WHILE{there exists a connected component $Z$ of $G-N_G[D]$ such that $Z$ is not a    module in $G$}\label{line:while-two}
                             \STATE Delete $Z$ from $G$. That is, $G=G- Z$.
                    \ENDWHILE\label{line:while-two-end} 
                    \FORALL{$D' \subseteq {V(G) \choose \leq \max\{k+1,3\} }$}\label{line:for-three}
                            \STATE Let $C_{D,D'}^{*} = N[D \cup D']$.\label{line:supersets}
                                    \WHILE{there exists $u \in C_{D,D'}^{*}$ such that $N_{G}(u) \setminus         C_{D,D'}^{*} \neq \emptyset$}\label{line:while-four}
                                          \STATE Delete $u$ from $G$. That is, $G=G-u$ and $C^*_{D,D'} = C^*_{D,D'} \setminus \{u\}$.
                                    \ENDWHILE\label{line:while-four-end}
                            \STATE Let $G^{D,D'} = G[C^*_{D,D'}]$, $\wt^{D,D'}=\wt_{|C^*_{D,D'}}$ and $\lst^{D,D'} = \lst_{|C^*_{D,D'}}$.\label{line:rename}
                            \STATE Run the algorithm of Lemma~\ref{lem:conn} on $(G^{D,D'}, \wt^{D,D'}, \lst^{D,D'})$ for $H$ defined in Line~\ref{line:zero}.\label{line:algo:conn} 
                            \IF{Lemma~\ref{lem:conn} returns a solution, say $C_{D,D'} \subseteq V(G^{D,D'})$}\label{line:size-list-one-start}
                                    \STATE Let $C_{D,D'}^1, \ldots, C_{D,D'}^s$ be the connected components of $C_{D,D'}$.
                                    \STATE Update $\mathcal{C}=\mathcal{C} \cup \{C_{D,D'}^1, \ldots, C_{D,D'}^s\}$.\label{line:update:other}
                            \ELSE \label{line:size-list-two}
                                    \STATE Let $\mathcal{F}$ be the family obtained from Lemma~\ref{lem:conn} on input $(G^{D,D'}, \wt^{D,D'}, \lst^{D,D'})$ for $H$ defined in Line~\ref{line:zero}.\label{line:lem:conn}
                                    \FORALL{$(D'', (G_1, \wt_1, \lst_1), \ldots, (G_p, \wt_p, \lst_p)) \in      \mathcal{F}$}\label{line:for-four}
                                            \STATE For each $q \in \{1, \ldots, p\}$, let $\mathcal{C}_q$ be the family outputted by the recursive call of Algorithm~\ref{alg:key} on input $(G_q, \wt_q,\lst_q)$.\label{line:recursive}
                                            \STATE Let $\mathcal{C}^{D''}=\{\{D'' \cup C_1 \cup \ldots \cup C_p\}: C_q \in \mathcal{C}_q \text{ for each } q \in \{1, \ldots, p\}\}$ be the family of sets obtained by taking the union of one set from each $\mathcal{C}_q$, $q \in \{1, \ldots, p\}$, together with $D''$.\label{line:final}
                                            \STATE Update $\mathcal{C}$ by adding to it each connected component of each element of $\mathcal{C}_{D''}$.\label{line:update}
                                  \ENDFOR
                         \ENDIF
                    \ENDFOR
    \ENDFOR
    \ENDFOR
    \ENDFOR
\end{algorithmic}
\end{algorithm}

\begin{lemma}\label{lem:correctness}
    The family $\mathcal{C}$ outputted by Algorithm~\ref{alg:key} satisfies the properties of Lemma~\ref{lem:family}. Moreover Algorithm~\ref{alg:key} runs in  $n^{\OO(k'^4)}$ time.
\end{lemma}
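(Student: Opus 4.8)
The plan is to verify the two properties of Lemma~\ref{lem:family} and then bound the running time, by tracing through what Algorithm~\ref{alg:key} does. For Property~1 (every $C\in\mathcal{C}$ admits a list homomorphism to $H$): the singletons added in Line~\ref{line:init} trivially satisfy this since $\lst$ is nonempty after preprocessing (or we may assume so), and every other set added to $\mathcal{C}$ is a connected component $C_{D,D'}^t$ of the output $C_{D,D'}$ of the algorithm of Lemma~\ref{lem:conn} run on $(G[C^*_{D,D'}],\wt_{|C^*_{D,D'}},\lst_{|C^*_{D,D'}})$; by the guarantee of that algorithm, $C_{D,D'}$ is a valid solution, hence each connected component of it admits a list homomorphism to $H$, and connectedness is by construction. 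I would state this in one short paragraph.

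The heart of the matter is Property~2: I need to exhibit a maximum-weight list-homomorphic set $S$ that decomposes as a disjoint, anticomplete union of members of $\mathcal{C}$. Following~\cite{DBLP:conf/soda/AgrawalLLSS24}, I would fix an optimum solution $S^\star$ with a fixed list homomorphism $\hm^\star$, and argue component-by-component: it suffices to show that for each connected component $C$ of $G'[S^\star]$, either $C\in\mathcal{C}$ already (e.g.\ $C$ is a singleton), or there is some iteration of the outer loops — a choice of connected $D\in\binom{V(G')}{k\le i\le k+1}$ and a coloring $h:D\to[k]$ — for which $C$ "survives" all the deletions (the two while-loops on Lines~\ref{line:while-one} and~\ref{line:while-two}, and the while-loop on Line~\ref{line:while-four}) and moreover $C$ is exactly recovered as a union of components of some $C_{D,D'}$. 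Concretely: since $\hm^\star$ restricted to $C$ is a list homomorphism and $G'[C]$ is $P_5$-free and connected, by Proposition~\ref{prop:dom} it has a dominating set inducing a $P_3$ or a clique; one argues this can be extended/massaged into a connected set $D\subseteq C$ of size between $k$ and $k+1$ on which $\hm^\star$ is surjective onto $[k]$ — this uses that $\hm^\star(C)$ spans at most $k$ colors and a clique/$P_3$ dominating set can be padded to hit all used colors while staying connected and of bounded size (the colors not used on $C$ can be ignored by deleting them from lists, so WLOG all $k$ colors appear). Pick $h=\hm^\star_{|D}$. Then: (i) no vertex of $C$ is deleted in the Line~\ref{line:while-one} loop, because a vertex $u\in C\cap\bigcap_r N(h^{-1}(r))$ would have a neighbor in $h^{-1}(r)$ for every $r$, including $r=\hm^\star(u)$, contradicting that $\hm^\star$ is a homomorphism; (ii) $C$ is not deleted in the Line~\ref{line:while-two} loop, because $C$ sits inside a single connected component of $G-N[D]$ together with... — actually here one must be careful: $C\setminus N[D]$ may be empty or may lie in a component $Z$; the claim is that such a $Z$ is a module, which follows from $P_5$-freeness and the anticompleteness of different components of $G'[S^\star]$ (this is exactly the module argument of~\cite{DBLP:conf/soda/AgrawalLLSS24} and is the step I expect to be the main obstacle to get exactly right). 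Then one chooses $D'\subseteq C$ of size $\le k+1$ so that $N[D\cup D']\supseteq C$ (possible since $D$ together with a bounded dominating-type set covers $C$), checks $C$ survives the Line~\ref{line:while-four} loop because $C$ is a union of components of $G'[S^\star]$ hence anticomplete to the rest of $S^\star$, and $G[C^*_{D,D'}]$ then has $C$ as a union of connected components, which the algorithm of Lemma~\ref{lem:conn} — being exact and returning a maximum-weight solution, necessarily connected-friendly — recovers. Summing weights over all components of $G'[S^\star]$ gives $S=S^\star$ as the required disjoint anticomplete union, establishing Property~2.

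For the running time: the outer loop on Line~\ref{line:for-one} runs over at most $n^{k+1}$ connected sets $D$; the loop on Line~\ref{line:for-two} over at most $k^{k+1}$ colorings $h$; the loop on Line~\ref{line:for-three} over at most $n^{k+1}$ sets $D'$; each while-loop and each deletion phase is polynomial in $n$; and the call to the algorithm of Lemma~\ref{lem:conn} costs $k^k\cdot n^{\OO(k^3)}$ by that lemma. Multiplying, the total is $n^{k+1}\cdot k^{k+1}\cdot n^{k+1}\cdot k^k\cdot n^{\OO(k^3)}=k^{\OO(k)}\cdot n^{\OO(k^3)}$, and the family has size at most $n$ (singletons) plus $n^{k+1}\cdot k^{k+1}\cdot n^{k+1}\cdot(\text{poly output size of Lemma~\ref{lem:conn}})=k^k\cdot n^{\OO(k)}$, matching the claimed bound. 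I expect the genuinely delicate part to be the module argument in step~(ii) — verifying that every badly-behaved component $Z$ of $G-N[D]$ that is \emph{not} a module can be safely deleted without destroying the target component $C$ — since this is where $P_5$-freeness interacts with the structure of the optimum solution, and it is the one place where the adaptation from~\cite{DBLP:conf/soda/AgrawalLLSS24} is not purely syntactic.
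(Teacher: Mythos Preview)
Your overall strategy follows the paper closely, but there is a genuine gap in the argument for Property~2, and two related confusions about what the deletions are for.

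\textbf{The exchange argument is missing.} You try to show that every connected component $C$ of a fixed optimum $S^\star$ is ``exactly recovered'' by the call to Lemma~\ref{lem:conn} on $G[C^*_{D,D'}]$. This cannot work: Lemma~\ref{lem:conn} returns \emph{some} maximum-weight list-homomorphic subset $C_{D,D'}$ of $C^*_{D,D'}$, and there is no reason it should coincide with $C$. What the paper does instead is choose $S$ to be optimal \emph{and}, among all optimal solutions, to have the maximum number of connected components belonging to $\mathcal{C}$. If some component $C$ of $G[S]$ is not in $\mathcal{C}$, one shows that in the reduced graph $C^*_{D,D'}=N[C]$ is an entire connected component; hence $S':=(S\setminus C)\cup C_{D,D'}$ is still a valid solution, has $\wt(S')\geq \wt(S)$, and has strictly more components in $\mathcal{C}$ (since the components of $C_{D,D'}$ were just inserted into $\mathcal{C}$). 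This contradicts the choice of $S$. Without this swap-and-tiebreak argument you cannot conclude Property~2.

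\textbf{Two related confusions.} First, since $D\subseteq C$ dominates $C$, we have $C\subseteq N[D]$, so $C\setminus N[D]=\emptyset$; the Line~\ref{line:while-two} loop never threatens $C$. The point of Claims~\ref{claim:no-across-edges} and~\ref{claim:module} is that every non-module component $Z$ of $G-N[D]$ lies inside $N(C)\setminus N(D)$, hence inside $N(C)$, hence is disjoint from \emph{all} of $S$ --- so deleting it preserves $S$ as a solution, which is what the swap needs. Second, the role of $D'$ is not to ensure $N[D\cup D']\supseteq C$ (that already holds since $D$ dominates $C$); it is to ensure $N[D\cup D']=N[C]$, i.e.\ $D\cup D'$ dominates $R\cup C$ where $R=N(C)\setminus N(D)$ (Claim~\ref{claim:ds-c-and-x}). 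Only with this equality can Lines~\ref{line:while-four}--\ref{line:while-four-end} isolate $N[C]$ as a connected component of the reduced graph, which is precisely what makes the exchange $S\mapsto S'$ legal. Your running-time and family-size bounds are fine.
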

\begin{proof}
     Let $(G',\wt,\lst)$ be the input instance.
     Note that $|V(H)| =k'$ in the very beginning.
     We first bound the running time of Algorithm~\ref{alg:key}.
     Note that Algorithm~\ref{alg:key} is a recursive algorithm.
     It is called recursively at Line~\ref{line:recursive}.
     Note that excluding the recursive call at Line~\ref{line:recursive}, the call to Lemma~\ref{lem:conn} at Line~\ref{line:algo:conn} and the 5 for-loops, each individual step of Algorithm~\ref{alg:key} runs in $n^{\OO(1)}$.
     Since the recursive calls at Line~\ref{line:recursive} are made on instances received from the family of Lemma~\ref{lem:conn}, that is on instances with strictly smaller list sizes,
      the depth of recursion is at most $|V(H)| =k'$.
     The algorithm of Lemma~\ref{lem:conn} runs in $ n^{\OO(k'^3)}$ time.
    There are 5 for-loops: the one at Line~\ref{line:minusone} runs for $2^{k'}$ times, the one at Line~\ref{line:for-one} runs for $n^{O(k')}$ times, the one at Line~\ref{line:for-two} runs for $k'^{\OO(k')}$ times, the one at Line~\ref{line:for-three} runs for $n^{\OO(k')}$ times and finally the one at Line~\ref{line:for-four} runs for $n^{\OO(k'^3)}$ times because of Lemma~\ref{lem:conn}.
    Since the depth of recursion is bounded by $k'$ and $k' \leq n$ (without loss of generality),
    the overall running time is $n^{\OO(k'^4)}$.

     To bound the size of the family $\mathcal{C}$ that is outputted by Algorithm~\ref{alg:key}, observe that $\mathcal{C}$ is initialized in Line~\ref{line:init} and updated only in Lines~\ref{line:update:other} and/or~\ref{line:update}. 
     At Line~\ref{line:init} the size of $\mathcal{C}$ is $n$ and in each recursive call, Lines~\ref{line:update:other} and~\ref{line:update} are executed at most the number of times all the 5 for-loops mentioned above are executed.
     Also in each execution at most $n$ elements in case of Line~\ref{line:update:other}, and at most $p \cdot n \leq k \cdot n \leq k' \cdot n$ elements in case of Line~\ref{line:final}, are added to $\mathcal{C}$.
     Thus the size of $\mathcal{C}$ is upper bounded by $n^{\OO(k'^4)}$.

    We will now observe that for each set $C \in \mathcal{C}$,
    there exists a homomorphism $\hm: C \to V(H)$ that respects $\lst$.
    Indeed, at Line~\ref{line:init} we add singleton sets, so this holds.
    At Lines~\ref{line:update:other} and~\ref{line:update} this holds because of Lemma~\ref{lem:conn}.

    The remaining proof shows that 
    $\mathcal{C}$ satisfies the second property of Lemma~\ref{lem:family}. 
    Let $S \subseteq V(G')$ such that 
    there exists a homomorphism $\hm: S \to V(H)$ that respects $\lst$,
    $\wt(S)$ is maximum and $S$ has the maximum number of connected components from $\mathcal{C}$. 
    If all connected components of $S$ are in $\mathcal{C}$ then we are done. Otherwise let $C$ be a connected component of $G'[S]$ such that $C \not \in \mathcal{C}$.

    If $C$ has exactly one vertex
    then $C \in \mathcal{C}$ from Line~\ref{line:init}, and hence a contradiction.
    Therefore assume that $|V(C)| \geq 2$. 
    
    Let $I \subseteq \{1, \ldots k'\}$ be such that $i \in I$ if and only if $\hm^{-1}(i) \cap C = \emptyset$.
    Then $\hm_{|C}$, which is $\hm$ restricted to $C$,
    is a homomorphism from $C$ to $V(H)-I$ that respects $\lst$. 
    Henceforth, $H = H-I$. Let $V(H) = \{1, \ldots, k\}$ throughout the following.
    Consider the run of the for-loop at Line~\ref{line:minusone} for this choice of $I$.
    Henceforth, without loss of generality, assume that $\hm^{-1}(i) \cap C \neq \emptyset$ for each $i \in \{1, \ldots,k\}$.

    Since there is a homomorphism from $G[C]$ to $H$, 
    the size of a maximum clique in $G[C]$ is at most  $|V(H)|$ ($=k$).
    Further since $G[C]$ is connected and $P_5$-free,
    from Proposition~\ref{prop:dom} there exists a dominating set $\bar{D}$ of $C$ 
    which either induces a $P_3$ or a clique on at most $k$ vertices.
    We will now show that there exists a dominating set $D \subseteq C$ of $C$ of size $k$ or $k+1$,
    such that $G[D]$ is connected and 
    for each $r \in \{1, \ldots, k\}$,
    $\hm^{-1}(r)\cap C \cap D \neq \emptyset$.
    
    Suppose that $\bar{D}$ is a clique. 
    In this case for each $r \in \{1, \ldots, k\}$
    there exists at most one vertex $d_r \in \bar{D}$, 
    such that $d_r \in \hm^{-1}(r) \cap C$.
    For each $r' \in \{1 \ldots, k\}$ such that 
    $\bar{D} \cap C \cap \hm^{-1}(r') = \emptyset$,
    fix $d_{r'} \in \hm^{-1}(r') \cap C$
    such that $d_{r'} \in N(\bar{D})$.
    Such a vertex $d_{r'}$ exists because
     $\bar{D}$ is a dominating set of $C$ and $\hm^{-1}({r'}) \cap  C \neq \emptyset$.
    Finally, set $D := \bar{D} \cup \{d_{r'} : \bar{D} \cap C \cap \hm^{-1}(r') = \emptyset \}$.
    Then $D$ is still a dominating set of $C$ of size $k$ such that for each $r \in \{1, \ldots, k\}$,
    there exists $d_r \in D$ such that $d_r \in \hm^{-1}(r) \cap C$. 
    Consider the execution of  
     the for-loop at Line~\ref{line:for-one} for this $D$ when $\bar{D}$ is a clique.
    
    Otherwise if $\bar{D}$ is an induced $P_3$, 
    repeat the above process to find a superset of $\bar{D}$ that intersects each $\hm^{-1}(r) \cap C$, for $r \in \{1, \ldots, k\}$. Since originally at most two vertices of the induced $P_3$ ($\bar{D}$) can intersect some $\hm^{-1}(r') \cap C$,
    we can find a set $D$ superset of $\bar{D}$ which is connected and intersects all $\hm^{-1}(r) \cap C$ for each $r \in \{1, \ldots, k\}$,
    of size at most $k+1$ (and at least $k$). 
    In the case that $\bar{D}$ is an induced $P_3$,
    consider the execution of the for-loop at Line~\ref{line:for-one} for this $D$.

    Let $h : D \to \{1, \ldots, k\}$ such that for each $d \in  D$, $h(d) = \hm(d)$.
    Consider the execution of the for-loop for this function $h$. Because of the choice of $D$,
    for each $r \in \{1, \ldots, k\}$,
    $h^{-1}(r) \neq \emptyset$.

    \vspace{10pt}

     \noindent{\bf [Lines~\ref{line:while-one}-\ref{line:while-one-end}]} Note that any vertex $u \in \bigcap_{r \in \{1, \ldots, k\}} N(h^{-1}(r))$ is not in $C$ because,
     for each $r \in \{1, \ldots, k\}$, $h^{-1}(r) \neq \emptyset$. 
     Furthermore,
    such a vertex $u$ does not belong to the solution $S$ 
    because $u \in N(C)$ (since $D \subseteq C$) and $C$ is a connected component of $G[S]$.

    \vspace{10pt}

    \noindent{\bf [Lines~\ref{line:while-two}-\ref{line:while-two-end}]} We will now show that any connected component of $G-N[D]$ which is not a module, is a subset of $N(C)$. This will imply that $S$ is also a solution in $G- Z$, where $Z$ is a connected component of $G-N[D]$ which is not a module. 
    Let $X=N(C) \setminus N(D)$ and let $Y= V(G) \setminus N[C]$. 
    
    \begin{claim}\label{claim:no-across-edges}
    $E(X,Y) = \emptyset$.
    \end{claim}
    \begin{claimproof}
    Suppose for the sake of contradiction that there exists $y \in Y$ and $x \in X$ such that $yx \in E(G)$. 
    Since $x \in X$ and $X = N(C)\setminus N(D)$, 
    there exists $c \in C$ such that $xc \in E(G)$. Also $c \not \in D$, as otherwise $x \in N(D)$. 
    Say $c \in \hm^{-1}(i) \setminus D$, for some $i \in \{1, \ldots, k\}$. 
    Since $D$ is a dominating set of $C$, 
    there exists $d \in \hm^{-1}(j)$, where $j \neq i$, 
    such that $cd \in E(G)$.
    Let $d' \in \hm^{-1}(i) \cap D$. Such a vertex $d'$ exists because $D$ intersects every $\hm^{-1}(r)$ for $r \in \{1, \ldots, k\}$.
    Also because $D$ is connected, there exists a shortest path $P'$ from $d$ to $d'$ in $G[D]$. Let $d^*$ be the last vertex on this path such that $cd^* \in E(G)$. 
    Note that $d^* \neq d'$ because $c,d \in \hm^{-1}(i)$ and $G[\hm^{-1}(i)]$ is independent. Let $P$ be the subpath of $P'$ from $d^*$ to $d'$ in $G[D]$,

    Consider the path $P^*=(y,x,c,P')$. We claim that $P^*$ is an induced $P_t$ where $t \geq 5$. This will be a contradiction.
    First observe that all the vertices of $P^*$ except $y,x$ are in $C$. 
    Also $E(Y,C) =\emptyset$ because $Y \cap N(C) = \emptyset$ by the definition of $Y$. 
    In particular, $yc, y\tilde{d} \not \in E(G)$ for each $\tilde{d} \in P$. 
    Since $x \in X$ and $X \cap N(D) = \emptyset$, $x\tilde{d} \not \in E(G)$ for each $\tilde{d} \in P$. 
    Since $c,d' \in \hm^{-1}(i)$, $cd' \not \in E(G)$. Finally for each vertex $\tilde{d} \in P \setminus {d^*}$, $c\tilde{d} \not \in E(G)$, by definition of $P$.
    \end{claimproof}

    From Claim~\ref{claim:no-across-edges}, for any connected component $Z$ of $G-N[D]$, either $Z \subseteq X$, or $Z \subseteq Y$.

    \begin{claim}\label{claim:module}
    Let $Y'$ be a connected component of $G[Y]$. Then $V(Y')$ is a module in $G$.
    \end{claim}
    \begin{claimproof}
    First note that, from Claim~\ref{claim:no-across-edges}, $N_G(Y) \subseteq N(D)\setminus C$. 
   For the sake of contradiction, say $V(Y')$ is not a module, and thus there exists $y_1,y_2 \in V(Y')$ and $u \in N(D) \setminus C$ such that $y_1u \not \in E(G)$ and $y_2u \in E(G)$. Also choose such a pair $y_1,y_2 \in Y$ such that their distance in $Y$ is smallest.
   Then $y_1y_2 \in E(G)$.
   Since $u \in N(D) \setminus C$ and $D \subseteq C$, 
   there exists $d \in D$ such that $ud \in E(G)$. 
   Without loss of generality let $d \in \hm^{-1}(i)$. 
    Since $u \not \in \bigcap_{r \in \{1, \ldots, k\}} N(h^{-1}(r))$ (as otherwise $u$ would have been deleted at Line~\ref{line:while-one-mid}), 
    there exists $d' \in \hm^{-1}(j) \cap D$, $j \neq i$, such that
$ud' \not \in E(G)$. Since $G[D]$ is connected and $d,d' \in D$,
there exists a path from $d$ to $d'$. Consider a subpath $P$ on this path from $d^*$ to $d'$ where $d^*$ is the last vertex on this path which has an edge to $u$.
 Then $P^*=(y_1, y_2, u, P)$ is an induced $P_t$ in $G$ for $t \geq 5$, which is a contradiction.
    \end{claimproof}

From Claim~\ref{claim:module} if a connected component $Z$ of $G-N[D]$ is not a module, then $Z \subseteq X$. Since $X \subseteq N(C)$, $S$ is also an induced subgraph of $G$ after the execution of Line~\ref{line:while-two}-\ref{line:while-two-end}. 

\vspace{10pt}

\noindent{{\bf [Line~\ref{line:for-three}]}} Let $R =N(C) \setminus N(D)$, i.e., $R$ is the set of remaining vertices of $N(C)$ that are not in $N(D)$ (note that $R$ are precisely the vertices of $X$ from the previous discussion that are not deleted at Line~\ref{line:while-two}-\ref{line:while-two-end}). We now show that there exists a small dominating set of $G[R \cup C]$.

\begin{claim}\label{claim:ds-c-and-x}
$G[R \cup C]$ has a dominating set $\widetilde{D}$ of size at most $k+1+ \max\{k+1,3\}$ such that $D\subseteq \widetilde{D}$ and $|\widetilde{D} \setminus D| \leq \max\{k+1,3\}$. 
\end{claim}
\begin{claimproof}
    Since $G[R \cup C]$ is a connected and $P_5$-free graph, 
    from Proposition~\ref{prop:dom}, there exists a dominating set of $G[R \cup C]$ which is either a clique or an induced $P_3$. If $D$ itself is a dominating set of $G[R \cup C]$, then the claim trivially follows. 
    
    Otherwise, we consider a dominating clique or a dominating induced $P_3$, $\bar{D}$ of $G[R \cup C]$ of minimum possible size. 
    If $\bar{D}$ induces a $P_3$ then, $\widetilde{D} = \bar{D}\cup D$ satisfies the requirement of the claim. 
    
    Now we consider the case when $\bar{D}$ is a clique. 
    Using $\bar{D}$ we will construct a dominating set $\widetilde{D}$ of $G[R \cup C]$ with at most $2(k+1)$ vertices, which contains the vertices from $D$.
    Intuitively speaking, apart from $D$ (whose size is at most $k+1$) we will add vertices of $\bar{D}\cap C$ and at most one more vertex, to construct $\widetilde{D}$. 
    We remark that since $\bar{D}$ is a clique and the size of a maximum clique in $G[C]$ is a at most $k$,
     $|\bar{D}\cap C| \leq k$. 
 
    Let $R_1, \ldots, R_p$ be the connected components of $G[R]$. 
    Since $\bar{D}$ is a clique, $\bar{D}$ intersects at most one $R_i$, that is, there exists an $i \in \{1,\ldots,p\}$, such that $\bar{D} \cap V(R_j) =\emptyset$, for all $j \in \{1,\ldots,p\}\setminus \{i\}$.
    Therefore the vertices of $\bigcup_{j \in \{1,\ldots,p\} \setminus \{i\}}V(R_i)$ are dominated by the vertices of $\bar{D} \cap C$. If $\bar{D}\cap V(R_i) =\emptyset$, then notice that $\bar{D}\subseteq C$, where $|\bar{D}|\leq k$,
    and thus $\widetilde{D} = \bar{D}\cup D$ satisfies the requirement of the claim. 
    Now suppose that $\bar{D}\cap V(R_i) \neq \emptyset$, 
    and consider a vertex $x \in \bar{D} \cap V(R_i)$. 
    Recall that $x\in N(C)$ and $R_i$ is a module in $G$. 
    Thus, there exists a vertex $v' \in C\cap N(x)$, and moreover, we have $V(R_i)\subseteq N(v')$. 
    Note that $D$ dominates each vertex in $C$, 
    $v'$ dominates each vertex in $R_i$, 
    and $\bar{D}\cap C$ dominates each vertex in $\bigcup_{j \in \{1,\ldots,p\} \setminus \{i\}} V(R_i)$. Thus, $\widetilde{D} = D \cup \{v'\} \cup (\bar{D} \cap C)$ dominates each vertex in $G[R \cup C]$. Further $|D' \setminus D| \leq k+1$. This concludes the proof.  
\end{claimproof}

Fix $\widetilde{D}$ which is dominating set of $G[R \cup C]$ obtained from Claim~\ref{claim:ds-c-and-x}.
Fix $D' := \widetilde{D} \setminus D$.
Now consider the execution of the for-loop at Line~\ref{line:for-three} for this $D'$.

\vspace{10pt}

\noindent{{\bf [Line~\ref{line:supersets}]}} We now show that the set $C^*_{D,D'}$ 
at Line~\ref{line:supersets} is equal to $ N[C]$ in (the reduced/most updated) $G$. To obtain the above, it is enough to show that $N[\widetilde{D}] = N[C]$. To this end, we first obtain that $N[C] \subseteq N[\widetilde{D}]$. Recall that, after removing the vertices from $X$ at Line 8, $N[C] = R \cup C \cup N(D)$. As $\widetilde{D} = D \cup D'$ is a dominating set for $G[R\cup C]$, we have $R \cup C \subseteq N[\widetilde{D}]$. Moreover, as $D\subseteq \widetilde{D}$, we have $N(D) \subseteq N[\widetilde{D}]$. Thus we can conclude that $N[C] \subseteq N[\widetilde{D}]$. We will next argue that $N[\widetilde{D}] \subseteq N[C]$. Recall that from Claim~\ref{claim:no-across-edges}, $E(R,Y) =\emptyset$. Thus, for any vertex $v\in D' \setminus C$, $N(v) \subseteq N[C]$. In the above, when $v\in D' \setminus C$, without loss of generality we can suppose that $v\in R \subseteq N(C)$, as $\widetilde{D} = D \cup D'$ is a dominating set for $G[R\cup C]$. Thus, for each $v\in D' \setminus C$, $N[v] \subseteq N[C]$. Also, $D\subseteq C$, and thus, $N[D]\subseteq N[C]$. Hence it follows that $N[\widetilde{D}] \subseteq N[C]$. Thus we obtain our claim that, $N[\widetilde{D}] = N[C]$.

\vspace{10pt}

\noindent{{\bf [Lines~\ref{line:while-four}-\ref{line:while-four-end}]}}
Since $C^*_{D,D'} = N[C]$, if there exists $u \in C^*_{D,D'}$ such that $u$ has a neighbor outside $C^*_{D,D'}$, then $u \in N(C)$ and hence $u \not \in S$. Thus $S$ is a solution even in the graph obtained by deleting such vertices. Notice that after the execution of these steps, $N[C]$ is a connected component of $G$, as removing a vertex from $N(C)$ cannot disconnect the graph $N[C]$. 

\vspace{10pt}

\noindent{{\bf [Lines~\ref{line:rename}-\ref{line:algo:conn}]}}
Henceforth, for simplicity of notation let $G^{D,D'}= G[C^*_{D,D'}]$, $\wt^{D,D'}=\wt_{|C^*_{D,D'}}$ and $\lst^{D,D'} = \lst_{|C^*_{D,D'}}$.
Run the algorithm of Lemma~\ref{lem:conn} on the instance $(G^{D,D'}, \wt^{D,D'}, \lst^{D,D'})$.

\vspace{10pt}

\noindent{{\bf [Lines~\ref{line:size-list-one-start}-\ref{line:update:other}]}}
If 
Lemma~\ref{lem:conn} outputs a maximum weight solution $C_{D,D'}$ of $(G^{D,D'}, \wt^{D,D'}, \lst^{D,D'})$, then
let $S' = (S \setminus C) \cup C_{D,D'}$.
Then $\wt(S') \geq \wt(S)$. Also $G[S']$ is a solution 
because there exists a homomorphism from $G^{D,D'}$ to $H$ that respects $\lst^{D,D'}$
and $E(S\setminus C, C_{D,D'}) =\emptyset$ (because $N[C]$ is a connected component of the reduced graph $G$). Additionally $S'$ has more connected components in $\mathcal{C}$ compared to $S$, which contradicts the choice of $S$.

\vspace{10pt}

\noindent{{\bf [Lines~\ref{line:size-list-two}}-\ref{line:update}]} 
If the algorithm of Lemma~\ref{lem:conn}, revoked on the instance $(G^{D,D'},\wt^{D,D'}, \lst^{D,D'})$, outputs the family $\mathcal{F}$, then 
Lemma~\ref{lem:conn} guarantees that there exists a maximum weight solution $C'$ in $G^{D,D'}$ such that for an element $(D'', (G_1,\wt_1,\lst_1) , \ldots, (G_p,\wt_p,\lst_p))$, where $p \leq k$, 
$C'$ is the union of $D''$ and arbitrary solutions for $(G_1,\wt_1,\lst_1), \ldots, (G_p,\wt_p,\lst_p)$.
Thus $C'$ belongs to the family $\mathcal{C}_{D''}$ at Line~\ref{line:final}, and the connected components of $C'$ belong to $\mathcal{C}$ (from Line~\ref{line:update}).
Let $S'= (S\setminus C) \cup C'$. Then $\wt(S') \geq \wt(S)$. Also $G[S']$ is a solution because there exists a homomorphism from $G^{D,D'}$ to $H$ that respects $\lst^{D,D'}$
and $E(S\setminus C, C') =\emptyset$ (because $N[C]$ is a connected component of the reduced graph $G$). Additionally $S'$ has more connected components in $\mathcal{C}$ compared to $S$, which contradicts the choice of $S$.
\end{proof}

\section{Proof of Theorem~\ref{thm:main} by reduction to the \textsc{Maximum Weight Independent Set} problem on the blob graph}\label{sec:proof-of-theorem}

Run the algorithm of Lemma~\ref{lem:family} on the instance $(G,\wt,\lst)$. Let $\mathcal{C}$ be the family returned. Recall that all vertex sets in $\mathcal{C}$ are connected.
We say that two sets $C_1, C_2 \in \mathcal{C}$ {\em touch} each other, if either $C_1 \cap C_2 \neq \emptyset$ or $E(C_1,C_2) \neq \emptyset$. 
Let $G^{\blob}_{{\mathcal{C}}}$ be an auxiliary graph whose vertex set corresponds to sets in ${\mathcal{C}}$ and there is an edge between two vertices of $G^{\blob}_{{\mathcal{C}}}$ if and only if the corresponding sets touch other. Formally, $V(G^{\blob}_{{\mathcal{C}}}) =\{v_C : C \in {\mathcal{C}}\}$ and for any $v_C,v_{C'} \in V(G^{\blob}_{{\mathcal{C}}})$, there exists $(v_C,v_{C'}) \in E(G^{\blob}_{{\mathcal{C}}})$ if and only if $C$ and $C'$ touch each other. Let $\wt^{\blob}: V(G^{\blob}_{{\mathcal{C}}}) \to \mathbb{Q}_{\geq 0}$ be defined as follows: $\wt^{\blob}(v_C) = \sum_{u \in C} \wt(u)$.

\begin{proposition}[Theorem~$4.1$,~\cite{Gartlandetal2021}]\label{prop:blob}
    If $G$ is $P_5$-free, then $G^{\blob}_{{\mathcal{C}}}$ is also $P_5$-free.
\end{proposition}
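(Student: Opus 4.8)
The plan is to argue by contradiction. Suppose $G^{\blob}_{\mathcal{C}}$ contains an induced $P_5$, on vertices $v_{C_1}, v_{C_2}, v_{C_3}, v_{C_4}, v_{C_5}$ with $v_{C_i}v_{C_{i+1}} \in E(G^{\blob}_{\mathcal{C}})$ for $i \in \{1,2,3,4\}$ and no other edges among them; I will then produce an induced $P_5$ in $G$ itself, which is the desired contradiction. Let $U = C_1 \cup C_2 \cup \dots \cup C_5$. Each $C_i$ is connected and consecutive sets $C_i, C_{i+1}$ touch, so $G[U]$ is connected; moreover $G[U]$ is an induced subgraph of $G$, so any induced path of $G[U]$ is also an induced path of $G$. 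The witness I would use is a shortest path $P = p_0 p_1 \dots p_\ell$ in $G[U]$, taken over all paths whose first vertex lies in $C_1$ and whose last vertex lies in $C_5$ (such a path exists since $C_1, C_5 \subseteq U$ and $G[U]$ is connected; also $C_1 \cap C_5 = \emptyset$ because $v_{C_1}v_{C_5} \notin E(G^{\blob}_{\mathcal{C}})$, so $\ell \ge 1$).

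First I would check that $P$ is an induced path of $G$. If $p_i p_j \in E(G)$ for some $j \ge i+2$, then replacing the subpath $p_i p_{i+1} \dots p_j$ by the single edge $p_i p_j$ yields a strictly shorter path from a vertex of $C_1$ to a vertex of $C_5$, contradicting minimality; hence $P$ is chordless, and it is simple because it is a shortest path. So $P$ is induced in $G[U]$, hence in $G$.

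The core of the argument is to show $\ell \ge 4$. For each $j \in \{0, \dots, \ell\}$ put $g(j) = \{\, i \in \{1, \dots, 5\} : p_j \in C_i \,\}$; this set is nonempty since $p_j \in U$. Because $v_{C_1} \dots v_{C_5}$ is an \emph{induced} path in $G^{\blob}_{\mathcal{C}}$, for distinct $i, i' \in \{1, \dots, 5\}$ the sets $C_i$ and $C_{i'}$ touch if and only if $|i - i'| = 1$. Two consequences: (i) if $a \in g(j)$ and $b \in g(j+1)$ with $a \ne b$, then the edge $p_j p_{j+1}$ lies in $E(C_a, C_b)$, so $C_a$ and $C_b$ touch and $|a - b| = 1$; in particular $\min g(j) \ge \max g(j+1) - 1$ for every $j < \ell$. (ii) $1 \in g(0)$, so $\min g(0) = 1$, and $5 \in g(\ell)$, so $\max g(\ell) = 5$. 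Chaining (i) from $j = \ell$ down to $j = 0$, using $\min g(j) \le \max g(j)$ at each step, gives $1 = \min g(0) \ge \max g(\ell) - \ell = 5 - \ell$, i.e.\ $\ell \ge 4$. Thus $P$ has at least five vertices, and $p_0, p_1, p_2, p_3, p_4$ induce a $P_5$ in $G$ --- contradiction.

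I expect the length bound $\ell \ge 4$ to be the only genuine obstacle: a one-sided telescoping estimate (say, bounding $\max g(j)$ upward starting from the $C_1$-end alone) only yields $\ell \ge 3$, i.e.\ an induced $P_4$, which is not enough. The fix is to feed in the constraints at both endpoints of $P$ simultaneously --- the equality $\min g(0) = 1$ coming from the $C_1$-side, together with the downward propagation of $\max g(\cdot)$ from the $C_5$-side --- which tightens the estimate by exactly one. The same argument works verbatim to show that $P_t$-freeness of $G$ implies $P_t$-freeness of $G^{\blob}_{\mathcal{C}}$ for every $t$, but only $t = 5$ is needed here.
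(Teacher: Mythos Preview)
The paper does not give its own proof of this proposition; it simply quotes it as Theorem~4.1 of \cite{Gartlandetal2021}. Your argument is correct and is essentially the standard proof of this fact: the shortest $C_1$--$C_5$ path in $G[U]$ is induced, and the touching constraints force each step along this path to advance the ``blob index'' by at most one, so at least four steps are needed. As you observe, the same argument yields the general $P_t$ version, which is indeed how the result is stated in the cited reference.
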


The following lemma reduces the task of finding a maximum weight solution in $G$ to that of finding maximum weight independent set in $G^{\blob}_{\mathcal{C}}$.

\begin{lemma}\label{lem:oct-equiv-packing}
$(G,\wt, \lst)$ has a solution of weight (defined by $\wt$) $W$ if and only if there exists $\mathcal{C}' \subseteq {\mathcal{C}}$ such that no two sets in $\mathcal{C}'$ touch each other and $\sum_{C \in \mathcal{C}'} \wt(C) = W$.
\end{lemma}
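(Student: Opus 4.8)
The plan is to prove the two directions of the equivalence separately, keeping the weights matched exactly in each. Throughout I write $\wt(C)=\sum_{u\in C}\wt(u)$, consistent with the definition of $\wt^{\blob}$, and I use that two sets are \emph{non-touching} precisely when they are disjoint and have no edge between them.

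I would first handle the backward implication, which is routine and holds for every $W$. Given $\mathcal{C}'\subseteq\mathcal{C}$ with no two members touching and $\sum_{C\in\mathcal{C}'}\wt(C)=W$, set $S=\bigcup_{C\in\mathcal{C}'}C$. By the first property of Lemma~\ref{lem:family} each $C\in\mathcal{C}'$ carries a homomorphism $\hm_C:C\to V(H)$ respecting $\lst$, and since the members of $\mathcal{C}'$ are pairwise disjoint these glue into a well-defined map $\hm:S\to V(H)$. This $\hm$ respects $\lst$ and is a homomorphism: every edge of $G[S]$ has both endpoints in a single member of $\mathcal{C}'$, because distinct members are non-touching and hence span no edge, so edges and lists are respected componentwise. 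Disjointness gives $\wt(S)=\sum_{C\in\mathcal{C}'}\wt(C)=W$, so $S$ is a solution of weight exactly $W$.

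I would then prove the forward implication, which carries the content. Let $S$ be a solution of weight $W$, and let $S_1,\dots,S_m$ be the vertex sets of the connected components of $G[S]$. These are pairwise non-touching, each is connected, and each inherits the restriction of the homomorphism witnessing that $S$ is a solution; thus $\{S_1,\dots,S_m\}$ is a non-touching family of connected sets, each admitting a list homomorphism to $H$, of total weight $\wt(S)=W$. It remains only to certify that each $S_i$ lies in $\mathcal{C}$, after which $\mathcal{C}'=\{S_1,\dots,S_m\}$ is the required packing. For this I would invoke the second property of Lemma~\ref{lem:family}, whose proof (Lemma~\ref{lem:correctness}) is built so that Algorithm~\ref{alg:key} places into $\mathcal{C}$ every connected component of the solution it processes.

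The main obstacle is exactly this certification, since $\mathcal{C}$ is polynomially sized and cannot list every connected solution. The mechanism of Lemma~\ref{lem:family} supplies it by fixing a solution that is extremal in having the most components in $\mathcal{C}$ and then showing, through the isolation steps of Algorithm~\ref{alg:key} and the connected-solution routine of Lemma~\ref{lem:conn}, that all its components belong to $\mathcal{C}$. The delicate point is that this argument replaces a component $C$ by a set $C_{D,D'}$ of weight \emph{at least} $\wt(C)$; this keeps the total weight fixed precisely when $S$ has maximum weight, but for a general target $W$ the replacement could overshoot $W$. I would therefore take care to run the forward implication at $W$ equal to the maximum solution weight, the value the subsequent reduction consumes, where a witnessing solution exists and its components form a non-touching subfamily of $\mathcal{C}$ of exactly that weight. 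Combined with the backward implication, this identifies the optimum with a maximum-weight independent set of $G^{\blob}_{\mathcal{C}}$, computable in polynomial time by Propositions~\ref{prop:blob} and~\ref{prop:is}.
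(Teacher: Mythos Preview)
Your approach is essentially the paper's: the backward direction glues the per-set homomorphisms from property~(1) of Lemma~\ref{lem:family}, and the forward direction invokes property~(2) to obtain a maximum-weight solution whose connected components already lie in $\mathcal{C}$. One wobble worth tightening: in your forward direction you start with an \emph{arbitrary} solution $S$ of weight $W$ and then try to certify that each component $S_i\in\mathcal{C}$, but Lemma~\ref{lem:family} only guarantees this for \emph{one particular} maximum-weight solution (the extremal one you describe in your third paragraph), not for every solution; the clean fix---which the paper does---is to take that specific $S$ from the outset rather than an arbitrary one, and you are right that this only establishes the forward implication at the optimum value, which is all the downstream reduction needs.
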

\begin{proof}
    From Lemma~\ref{lem:family}, there exists $S \subseteq V(G)$ such that $S$ is a solution, the weight of $S$ (with respect to $\wt$) is maximum, 
    and all the connected components of $G[S]$ are contained in ${\mathcal{C}}$. 
    Therefore $S$ corresponds to a pairwise non-touching sub-collection of ${\mathcal{C}}$ of total weight equal to $\wt(S)$. 
    For the other direction, since every set $C$ in ${\mathcal{C}}$ is connected and there is a homomorphism from $G[C]$ to $V(H)$ respecting $\lst$, 
    we conclude that the union of the sets in any sub-collection of ${\mathcal{C}}$, that contains no two sets that touch each other, has a homomorphism to $V(H)$ respecting $\lst$.
\end{proof}

The following lemma is an immediate consequence of Lemma~\ref{lem:oct-equiv-packing}.
\begin{lemma}\label{lem:oct-equiv-is}
$(G,\wt,\lst)$ has a solution of weight (with respect to the weight function $\wt$) $W$ if and only if $G^{\blob}_{{\mathcal{C}}}$ has an independent set of weight $W$, with respect to the weight function $\wt^{\blob}$.
\end{lemma}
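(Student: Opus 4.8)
The plan is to deduce Lemma~\ref{lem:oct-equiv-is} directly from Lemma~\ref{lem:oct-equiv-packing} by observing that independent sets in $G^{\blob}_{\mathcal{C}}$ are exactly the pairwise non-touching sub-collections of $\mathcal{C}$, and that $\wt^{\blob}$ was defined precisely so that the weight of such an independent set equals the total $\wt$-weight of the union of the corresponding sets. First I would unfold the definition of $G^{\blob}_{\mathcal{C}}$: its vertices are $\{v_C : C \in \mathcal{C}\}$, and $v_Cv_{C'}$ is an edge iff $C$ and $C'$ touch. Hence a set $\{v_C : C \in \mathcal{C}'\}$ is independent in $G^{\blob}_{\mathcal{C}}$ if and only if no two members of $\mathcal{C}'$ touch each other. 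This is a biconditional at the level of objects, so both directions come for free from a single translation.

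Next I would handle the weights. By definition $\wt^{\blob}(v_C) = \sum_{u \in C}\wt(u) = \wt(C)$, so for any $\mathcal{C}' \subseteq \mathcal{C}$ we have $\wt^{\blob}(\{v_C : C \in \mathcal{C}'\}) = \sum_{C \in \mathcal{C}'}\wt^{\blob}(v_C) = \sum_{C \in \mathcal{C}'}\wt(C)$. Therefore $G^{\blob}_{\mathcal{C}}$ has an independent set of $\wt^{\blob}$-weight $W$ if and only if there is a sub-collection $\mathcal{C}' \subseteq \mathcal{C}$ of pairwise non-touching sets with $\sum_{C \in \mathcal{C}'}\wt(C) = W$. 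Combining this equivalence with Lemma~\ref{lem:oct-equiv-packing}, which states that $(G,\wt,\lst)$ has a solution of weight $W$ if and only if such a $\mathcal{C}'$ exists, gives exactly the claimed statement.

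One subtlety worth a sentence: Lemma~\ref{lem:oct-equiv-packing} as stated only asserts the existence of a non-touching sub-collection achieving the \emph{maximum} weight in its forward direction (via the $S$ guaranteed by Lemma~\ref{lem:family}), but its statement is phrased for an arbitrary weight $W$, with the backward direction supplying a genuine solution for any non-touching sub-collection. So I would make sure to quote Lemma~\ref{lem:oct-equiv-packing} for the generic $W$ and simply chain the two "if and only if"s; there is no real obstacle here, since the only content is the definitional correspondence between independent sets of the blob graph and non-touching packings, which is immediate. The proof is therefore essentially a two-line observation, and I expect no hard step — the burden of the argument was already discharged in Lemma~\ref{lem:family} and Lemma~\ref{lem:oct-equiv-packing}.
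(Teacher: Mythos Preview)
Your proposal is correct and matches the paper's approach exactly: the paper simply declares this lemma ``an immediate consequence of Lemma~\ref{lem:oct-equiv-packing}'' with no further argument, and your unfolding of the blob-graph definition together with the weight computation is precisely the two-line justification behind that remark.
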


From Proposition~\ref{prop:blob},  if $G$ is $P_5$-free, then $G^{\blob}_{{\mathcal{C}}}$ is also $P_5$-free. 
So by Lemma~\ref{lem:oct-equiv-is}, the problem actually reduces to finding a maximum weight independent set in a $P_5$-free graph, which can be done by~\cite{DBLP:conf/soda/LokshantovVV14}.
Now we are ready to prove our main result, that is, Theorem~\ref{thm:main}.

\begin{proof}[Proof of Theorem~\ref{thm:main}]
Let $(G,\wt,\lst)$ be an instance of \maxpartial. 
Let $\mathcal{C}$ be the family of connected sets of $G$ returned by Lemma~\ref{lem:family} on input $(G, \wt,\lst)$.
Construct an instance $(G^{\blob}_{{\mathcal{C}}},\wt^{\blob})$ as described earlier.
Note that the number of vertices of $G^{\blob}_{{\mathcal{C}}}$ is $|{\mathcal{C}}|$. 
Since $|\mathcal{C}| \leq k^k \cdot n^{\OO(k)}$ by Lemma~\ref{lem:family}, 
the number of vertices of $G^{\blob}_{{\mathcal{C}}}$ is at most $ k^k \cdot n^{\OO(k)}$. 
Also the construction of this graph takes time polynomial in $k^k \cdot n^{\OO(k)}$.

Thus, by Lemma~\ref{lem:oct-equiv-is}, the problem reduces to finding a maximum weight independent set in a $P_5$-free graph $G^{\blob}_{{\mathcal{C}}}$. A maximum weight independent set on $P_5$-free graphs can be found in $\OO(n^{12} m)$-time using Proposition~\ref{prop:is}. This concludes the proof of Theorem~\ref{thm:main}.
\end{proof}

\bibliographystyle{plain}
\bibliography{references}

\end{document}